\documentclass[12pt]{article}

\usepackage[T1]{fontenc}
\usepackage[utf8]{inputenc}
\usepackage[letterpaper,margin=1in]{geometry}
\usepackage{setspace}
\usepackage{amsmath,amssymb,amsthm,mathtools}
\usepackage[authoryear]{natbib}

\usepackage{tikz}
\usepackage[colorlinks=true,linkcolor=blue,citecolor=blue,urlcolor=blue]{hyperref}
\usepackage[disable]{todonotes}

\allowdisplaybreaks

\newtheorem{assumption}{Assumption}
\newtheorem{theorem}{Theorem}
\newtheorem{proposition}{Proposition}
\theoremstyle{remark}
\newtheorem{remark}{Remark}

\newcommand{\R}{\mathbb{R}}
\newcommand{\Prob}{\mathbb{P}}
\newcommand{\E}{\mathbb{E}}
\newcommand{\ind}{\mathbf{1}}

\DeclareMathOperator{\Supp}{Supp}

\title{Identification in Dynamic Dyadic Network \\Formation Models with Fixed Effects\footnote{We thank Ming Li, Yassine Sbai Sassi, and Zhengyan Xu for helpful discussions. AI tools (Claude, ChatGPT, Gemini, and Refine.ink) were used for research assistance and critical reviews; the authors assume full responsibility for any remaining errors.}}
\author{Wayne Yuan Gao\thanks{Department of Economics, University of Pennsylvania. Email: \texttt{waynegao@upenn.edu}}\,\, and Yi Niu\thanks{Department of Economics, University of Pennsylvania. Email: \texttt{n1y1@sas.upenn.edu}}}

\begin{document}
	
	\maketitle
	
	\begin{abstract}
		\noindent This paper establishes identification results in a dynamic dyadic network formation model with time-varying observed covariates, lagged local network statistics, and unobserved heterogeneity in the form of fixed effects. Our framework accommodates observed-covariate homophily, transitivity through common friends, second-order or indirect-friend effects, and more general local subgraph statistics within a single dynamic index model. The analysis combines two complementary ways of handling fixed effects: inequalities that integrate out time-invariant dyad heterogeneity by treating each dyad as a short panel, and signed-subgraph comparisons that difference out fixed effects algebraically through intertemporal variation within each dyad. We show that the semiparametric identifying restrictions can be sharpened using either or both of the following assumptions: (i) error distribution is serially independent with a known distribution, (ii) pairwise fixed effect takes the form of additive individual fixed effects. Combining (i) and (ii) under i.i.d.\ logit shocks, we obtain an exact conditional logit representation and provide sufficient conditions for point identification.
		
		\medskip
		
		\noindent \textbf{Keywords:} network formation, dynamic, dyadic, fixed effects, homophily, transitivity, subgraph, identification, semiparametric, conditional moment inequalities, logit
		
		\medskip
		
		\noindent \textbf{JEL Classification:} C14, C23, C31
		
	\end{abstract}
	
	\newpage
	
	\section{Introduction}
	
	Dynamic dyadic network formation models with state dependence, homophily, and local network spillovers create a natural tension between substantive realism and econometric tractability. On the one hand, lagged local network covariates such as common friends, friends-of-friends, and related subgraph counts are central for describing persistence, transitivity, and other forms of local clustering in network formation. On the other hand, once fixed effects are introduced, identification becomes difficult since observed link dynamics mix together structural state dependence, observed homophily, and time-invariant unobserved heterogeneity. An important econometric question in this context is whether these components can be separated using a panel of network data.
	
	The paper studies a dynamic dyadic network formation model in which current link surplus depends on time-varying observed dyadic covariates, a vector of lagged local network statistics, and time-invariant unobserved heterogeneity (fixed effects). The key insight is that, once the network statistics are lagged and observed, the model can be studied as a dynamic panel with lagged endogenous network covariates.
	
	Under unrestricted time-invariant dyad effects and an unknown error distribution, we propose two complementary semiparametric identification routes. The first route treats each dyad as a short panel and \emph{integrates out} the fixed effect with respect to an unknown distribution, while the second route uses dynamic signed-subgraph comparisons to \emph{difference} out fixed effects algebraically through intertemporal variation within each dyad. Both routes rely on a ``bounding-by-$c$'' technique proposed in \citet{GaoWang26} to handle the endogeneity issue arising from the lagged outcome variables. In addition, we synthesize the two routes into an umbrella framework that produces a general class of identifying restrictions along the ``difference-out'' / ``integrate-out'' spectrum. 
	
	The paper then shows how additional structure can be exploited to further sharpen the identification results. First, the assumption that errors are serially independent with a known distribution induces additional identifying restrictions on subgraphs with differenced-out fixed effects. Second, when pairwise fixed effect takes an additive form of individual-level fixed effects, we can obtain additional identifying restrictions using a weighted differencing argument. Third, we combine the two additional structures above with a logit error specification and show that the resulting model admits an exact conditional logit representation for any completely node-balanced configuration of edge-time cells---a class that includes within-date tetrads (the per-period analogue of \citealt{graham_2017}) but also intertemporal tetrads, triadic cycles, and other configurations that exploit both cross-node and cross-period variation. We provide sufficient conditions for point identification based on this enlarged class.
	
	\subsection*{Related Literature}
	
	Our paper builds upon and contributes to the econometric literature on network formation models. See, e.g., \citet{DEPAULA2020a,DEPAULA2020b} and \citet{GRAHAM2020}, for general surveys on this topic.
	
	More specifically, our paper belongs to the line of econometric work on \emph{dyadic} network formation models with homophily effects and individual unobserved heterogeneity (fixed effects), as pioneered by \citet{graham_2017}. \citet{graham_2017} provides the canonical  dyadic setup under the logit error specification. \citet{Candelaria2017}, \citet{toth2017}, \citet{Jochmans2018}, \citet{GAO2020}, and \citet*{GAO2023} consider various generalizations and adaptations of \citet{graham_2017}, but all focus on the static setting where the network is observed only once. 
	
	In contrast, this paper considers a dynamic environment in which current link formation depends on lagged local network statistics, following the conceptual framework of \citet{graham_2016}. Specifically, \citet{graham_2016} considers a dyadic network formation model with lagged common-friends transitivity, unrestricted dyad heterogeneity, and a fully i.i.d. logit shock specification. Its main identification device is the stable-neighborhood argument, designed to separate transitivity from unrestricted time-invariant dyad heterogeneity. However, \citet{graham_2016} does not incorporate observed covariates; in fact, once one introduces explicit time-varying observed homophily, the stable-neighborhood approach becomes less convenient, analogously to a similar issue in nonlinear dynamic panel models in \citet{honore2000panel}: one would need to compare dyads whose local network environments are sufficiently stable while simultaneously matching on time-varying covariate histories. Relative to \citet{graham_2016}, our contribution is to develop identification tools that remain applicable once explicit time-varying observed homophily is brought into the model. Furthermore, we provide results not only for the parametric setting with i.i.d. logit setup, but also for a semiparametric setting where errors are allowed to be serially correlated with unknown distributions.
	
	The paper also draws directly on \citet{GaoWang26}, which develops panel-style ``bounding-by-$c$'' arguments for nonlinear dynamic models with fixed effects. The model setup in this paper is analogous to a nonlinear panel model with lagged endogenous regressors. However, in the current paper, at each time point, the "cross-sectional" data structure is given by a network of individuals along with their covariates, which is different from the ``purely individual'' data structure in \citet{GaoWang26}. Hence, while the core idea of \citet{GaoWang26} continues to be useful, our paper considers a data structure not covered in \citet{GaoWang26}, exploits nontrivial adaptations of the ``bounding-by-$c$'' technique, and obtains identifying restrictions that have no direct analog in the standard panel data setting.
	
	The paper is also related to and different from \citet*{GaoLiXu26}, which studies static strategic network formation models. First, \citet*{GaoLiXu26} considers a data structure where a single large network is observed once, while our current paper focuses on the alternative ``panel'' data structure where we have network data over multiple time periods. The time dimension in our current paper allows us to carry out intertemporal comparisons that have no direct analog in  \citet*{GaoLiXu26}. Second, both \citet*{GaoLiXu26} and this paper provide econometric methods to study how local network structure affects the linking decision between two individuals, but the two papers approach this issue from two very different, and likely complementary, perspectives: \citet*{GaoLiXu26} considers strategic interactions and simultaneity issues in a static setting, while the current paper considers a sequentially exogenous setup based on lagged networks. One implication is that, in our current paper, there is no need to impose separate subnetwork-CCP identifiability conditions as required in \citet*[Assumption 4 and Section 4]{GaoLiXu26}. Third, while both papers exploit signed-subgraph and weighted-differencing techniques to eliminate fixed effects, the current paper features results with no analogues in \citet*{GaoLiXu26}, since here we can exploit intertemporal variations and the ``bounding-by-$c$'' technique from \citet{GaoWang26}, and obtain results even without the additive fixed effect structure, which is always assumed in \citet*{GaoLiXu26}.

	The rest of the paper proceeds as follows. Section \ref{Section: Model Setup} introduces the model setup. Section \ref{Section: Semiparametric Identification} develops the paper's main semiparametric identification architecture under arbitrary dyad effects, including both dyad-panel and dynamic signed-subgraph arguments and the unified partial-differencing perspective linking them. Section~\ref{Section: Sharper Identification under Additional Structures} studies how additional structure sharpens those results through known composite-error distributions and additive-node restrictions. Section~\ref{Section: Conclusion} concludes.
	
	\section{Model Setup}\label{Section: Model Setup}
	
	This section introduces the paper's baseline dynamic dyadic network formation model and the notation used throughout the identification analysis.
	
	Consider a set of \emph{nodes} (representing individuals or other types of economic agents) indexed by $i$ with dyads, i.e., pairs of nodes, indexed by $ij$. Throughout this paper, we focus on undirected and unweighted networks. Writing $D_{ijt}\in\{0,1\}$ as the link indicator for dyad $ij$ at time $t$, we consider the following dynamic network formation model
	\begin{equation}
		D_{ijt}
		=
		\ind\left\{
		Z_{ijt}'\alpha_0 + X_{ij,t-1}'\lambda_0 + A_{ij} - U_{ijt} \ge 0
		\right\},
		\quad t=1,\ldots,T,
		\label{eq:main-model}
	\end{equation}
	with $Z_{ijt} := |Z_{it}-Z_{jt}| \in \R^{d_h}$  denoting the observed time-varying dyadic covariates at time $t$, where the node-level covariate $Z_{it}$ may be vector-valued and $|\cdot|$ denotes coordinate-wise absolute value. Here $X_{ij,t-1}\in\R^{d_x}$ is a vector of observed lagged network covariates of fixed dimension, $A_{ij}$ is a time-invariant unobserved dyad fixed effect, and $U_{ijt}$ are idiosyncratic time-varying dyadic shocks. The unknown parameter vector $\theta_0 := (\alpha_0',\lambda_0')'$ consists of the coefficient vector on observed homophily $\alpha_0\in\R^{d_h}$ and that on lagged network covariates $\lambda_0\in\R^{d_x}$.\footnote{Because the error distribution is left unspecified in the semiparametric analysis, the model is invariant to a common positive rescaling of $(\theta,A_{ij},U_{ijt})$. The semiparametric identified sets derived below fully reflect this scale indeterminacy. Scale is pinned once the error distribution is specified, as in the logit specification of Section~\ref{Section: Sharper Identification under Additional Structures}.}
	
	Note that any time-invariant dyadic observable is absorbed by the fixed effect $A_{ij}$ in the unrestricted-dyad-effects baseline; the semiparametric identification arguments therefore exploit variation in the time-varying covariates $Z_{ijt}$ and the lagged network statistics $X_{ij,t-1}$.
	
	The framework incorporates several familiar ingredients in network formation models. Since $Z_{ijt}$ is constructed as distances between node-level observed characteristics, the model captures homophily with respect to observed characteristics. If $X_{ij,t-1}$ includes lagged common friends, the model captures potential preference for transitivity. If $X_{ij,t-1}$ includes lagged friends-of-friends or other second-order reachability measures, it captures indirect-friend effects. More generally, $X_{ij,t-1}$ may collect any fixed-dimensional vector of lagged local subgraph statistics that a researcher deems relevant for the network formation problem.
	
	It is also useful to explicitly relate our model to the setup in \citet{graham_2016}, whose baseline dynamic specification is
	\begin{equation}
		D_{ijt}
		=
		\ind\left\{
		\beta_0 D_{ij,t-1} + \gamma_0 R_{ij,t-1} + A_{ij} - U_{ijt} \ge 0
		\right\},
		\label{eq:graham-original}
	\end{equation}
	where $R_{ij,t-1} := \sum_{k \neq i,j} D_{ik,t-1}D_{jk,t-1}$ is the lagged number of common friends. Note that equation \eqref{eq:graham-original} is a special case of \eqref{eq:main-model}, obtained by omitting the $Z_{ijt}'\alpha_0$ term and setting
	\[
	X_{ij,t-1}
	:=
	\bigl(D_{ij,t-1},R_{ij,t-1}\bigr)',
	\quad
	\lambda_0 := (\beta_0,\gamma_0)'.
	\]
	Our framework is therefore broader in two directions at once: it allows explicit observed-covariate homophily through time-varying $Z_{ijt}$ and it allows a general fixed-dimensional vector of lagged local network covariates rather than only lagged own-link status and common friends. The current model also contains the static formation model of \citet{graham_2017} as an effectively nested special case, which can be obtained by suppressing the lagged-network vector $X_{ij,t-1}$, restricting the fixed effect to take the additive-node form $A_{ij}=\nu_i+\nu_j$, and interpreting the resulting model at a single time point. Nothing in the semiparametric arguments below uses the special two-regressor form $(D_{ij,t-1},R_{ij,t-1})$ beyond the fact that it is an observed lagged vector that satisfies certain exogeneity conditions, and the proofs go through unchanged for any fixed-dimensional $X_{ij,t-1}$.
	
	\paragraph{Observed data.}
	The econometrician observes the node-level covariates $(Z_{it})_{t=1}^{T}$ for each node $i$ and the network $(D_{ijt})_{t=0}^{T}$ for all dyads $ij$. Because $X_{ij,t-1}$ is computed from the lagged network, its construction at $t=1$ requires the initial network $(D_{ij0})_{ij}$, which is treated as given. No distributional assumption is placed on the initial network.
	
	\bigskip
	
	In the following, it would be convenient to write $\theta := (\alpha',\lambda')'$ and
	\[
	W_{ijt}(\theta)
	:=
	Z_{ijt}'\alpha + X_{ij,t-1}'\lambda,
	\quad
	V_{ijt} := U_{ijt} - A_{ij},
	\]
	so that model \eqref{eq:main-model} becomes
	\[
	D_{ijt} = \ind\{V_{ijt} \le W_{ijt}(\theta_0)\}.
	\]
	From the viewpoint of dyad $ij$, the model is therefore a dynamic binary panel with one time-invariant dyad effect and lagged endogenous network covariates. Below we explain how to exploit the intertemporal variations of the panel structure, as well as the additional two-dimensional network structure at each fixed time point, to obtain identifying restrictions.
	
	\section{Semiparametric Identification}\label{Section: Semiparametric Identification}
	
	This section develops the paper's semiparametric identification approach under unrestricted form of dyad fixed effects. The first subsection integrates the fixed effect out by treating each pair as a short panel. The second subsection differences the dyad effect out directly through dynamic signed-subgraph comparisons. The third shows that these are two endpoints of a broader spectrum that combines differencing and integration.
	
	\begin{assumption}[Idiosyncratic Dyadic Shocks]
		\label{ass:eh}
		Write $U_{ij}^{1:T}: = (U_{ij1},\dots,U_{ijT})'$ and $Z_i^{1:T}: = (Z_{i1}',\dots,Z_{iT}')'$. The dyad-level shock vectors $\{U_{ij}^{1:T}: i<j\}$ are i.i.d. across dyads and are jointly independent of $(A_{ij})_{ij}$ and $(Z_i)_i$, i.e.,
		\[
		\{U_{ij}^{1:T}: i<j\}
		\perp
		\left(
		\{A_{ij}: i<j \in\{1,...,n\}\},
		\{Z_i^{1:T}: i\in\{1,\ldots,n\}\}
		\right).
		\]
		Moreover, the distribution of $U_{ijt}$ is homogeneous across time $t$, i.e., for each dyad $ij$ and each pair of dates $t,s \in \{1,\ldots,T\}$, $U_{ijt} \sim U_{ijs}.$
	\end{assumption}
	
	Throughout, all conditional distributions are assumed to admit regular versions, so that conditioning on exact realizations of covariate histories and taking suprema or infima over their supports are well-defined operations.\footnote{Equivalently, the reader may interpret all sup/inf operations as essential suprema/infima with respect to the relevant marginal measures.}
	
	Assumption \ref{ass:eh} is standard in the dyadic network formation literature. It says that the dyad-level shock process is i.i.d.\ across dyads, exogenous relative to both the time-invariant latent heterogeneity and the entire observed exogenous covariate array, has homogeneous marginals over time, and may nevertheless be serially correlated within a dyad. Arbitrary dependence between $A_{ij}$ and the covariate histories is still allowed. The i.i.d.\ assumption across dyads rules out unobserved community-level shocks that simultaneously affect multiple dyads at the same date; such extensions are left to future work. The i.i.d.\ logit assumption in \citet{graham_2016} can be viewed as a strengthening of Assumption \ref{ass:eh}.
	
	\subsection{Dyadic Panel Identification}
	
	We apply the ``bounding-by-$c$'' technique in \citet{GaoWang26} and obtain bounds free of lagged outcome variables, which allows us to exploit the independence and time-homogeneity assumption on idiosyncratic dyadic shocks $U_{ijt}$. Specifically, fix $h\in\Supp(Z_{ij}^{1:T})$, $c\in\R$, and pair of dates $(t,s)$. If $D_{ijt}=1$ and $W_{ijt}(\theta_0)\le c$, then by \eqref{eq:main-model},
	\[
	V_{ijt}\le W_{ijt}(\theta_0)\le c \implies D_{ijt}\ind\{W_{ijt}(\theta_0)\le c\}
	\le
	\ind\{V_{ijt}\le c\}.
	\]
	Taking expectations conditional on $Z_{ij}^{1:T}=h$ gives
	\[
	L_t(c \mid h; \theta)
	:=\E\left[D_{ijt}\ind\{
	W_{ijt}(\theta)\le c\}
	\mid
	Z_{ij}^{1:T}=h
	\right] \le \Prob(V_{ijt}\le c \mid Z_{ij}^{1:T}=h).
	\]
	Similarly, if $D_{ijs}=0$ and $W_{ijs}(\theta_0)\ge c$, one can get
	\[
	U_s(c \mid h; \theta)
	:=1-\E\!\left[
	(1 - D_{ijs})\ind\{W_{ijs}(\theta)\ge c\}
	\mid
	Z_{ij}^{1:T}=h
	\right]
	\ge
	\Prob(V_{ijs}\le c \mid Z_{ij}^{1:T}=h).
	\]
	By the joint independence and the homogeneous-marginal parts of Assumption \ref{ass:eh}, $\Prob(V_{ijt}\le c \mid Z_{ij}^{1:T}=h)$ is common across dates. After taking supremum over $t$ and infimum over $s$, we obtain an identified set for $\theta$. We summarize the results in the following proposition.
	
	\begin{proposition}[Dyadic Panel Identifying Restrictions]
		\label{prop:dyad}
		For any $\theta=(\alpha',\lambda')'$, define the intertemporally aggregated bounds
		\[
		\overline L(c \mid h; \theta)
		:=
		\max_{t=1,\ldots,T} L_t(c \mid h; \theta),
		\quad
		\underline U(c \mid h; \theta)
		:=
		\min_{t=1,\ldots,T} U_t(c \mid h; \theta).
		\]
		Then under \eqref{eq:main-model} and Assumption \ref{ass:eh}, we have 
		$\theta_0 \in \Theta_I^{\mathrm{dyad}}$, where
		\[
		\Theta_I^{\mathrm{dyad}}
		:=
		\left\{
		\theta:
		\overline L(c \mid h; \theta)
		\le
		\underline U(c \mid h; \theta)
		\text{ for all } c\in\R
		\text{ and all } h\in\Supp(Z_{ij}^{1:T})
		\right\}.
		\]
	\end{proposition}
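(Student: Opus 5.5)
The plan is to show that at the true parameter $\theta=\theta_0$, for every fixed $h\in\Supp(Z_{ij}^{1:T})$ and $c\in\R$, every lower bound $L_t(c\mid h;\theta_0)$ and every upper bound $U_s(c\mid h;\theta_0)$ sandwich one common quantity. Define
\[
F(c\mid h):=\Prob(V_{ij1}\le c\mid Z_{ij}^{1:T}=h).
\]
Then $\max_t L_t\le F(c\mid h)\le \min_s U_s$, and so $\theta_0$ satisfies the defining inequality of $\Theta_I^{\mathrm{dyad}}$. Since $T$ is finite, max and min are attained, so no limiting argument is needed.

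\textbf{Step 1 (pointwise bounds).} Under \eqref{eq:main-model}, $D_{ijt}=\ind\{V_{ijt}\le W_{ijt}(\theta_0)\}$. On the event $\{D_{ijt}=1,\ W_{ijt}(\theta_0)\le c\}$ we have $V_{ijt}\le c$, which gives the pointwise inequality
\[
D_{ijt}\ind\{W_{ijt}(\theta_0)\le c\}\le \ind\{V_{ijt}\le c\}.
\]
Symmetrically, on $\{D_{ijs}=0,\ W_{ijs}(\theta_0)\ge c\}$ we have $V_{ijs}>W_{ijs}(\theta_0)\ge c$, so
\[
(1-D_{ijs})\ind\{W_{ijs}(\theta_0)\ge c\}\le \ind\{V_{ijs}>c\}.
\]
Taking conditional expectations given $Z_{ij}^{1:T}=h$, using the regular versions assumed in the text, yields
\[
L_t(c\mid h;\theta_0)\le \Prob(V_{ijt}\le c\mid h),\qquad U_s(c\mid h;\theta_0)\ge \Prob(V_{ijs}\le c\mid h).
\]
The key feature is that these bounds hold even though $W_{ijt}$ contains $X_{ij,t-1}$, which is endogenous through past shocks and $A_{ij}$. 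The right-hand sides involve only $V$ and no lagged outcome, which is the point of bounding by the constant $c$.

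\textbf{Step 2 (time invariance of the bound).} I will show that $\Prob(V_{ijt}\le c\mid Z_{ij}^{1:T}=h)=\Prob(U_{ijt}-A_{ij}\le c\mid Z_{ij}^{1:T}=h)$ does not depend on $t$. By Assumption \ref{ass:eh}, $U_{ij}^{1:T}$ is independent of $(A_{ij},Z_i^{1:T},Z_j^{1:T})$, and $Z_{ij}^{1:T}$ is a measurable function of the node covariates. Hence, conditional on $Z_{ij}^{1:T}=h$, $U_{ijt}$ remains independent of $A_{ij}$ and keeps its unconditional marginal. Writing $G$ for the common marginal CDF of $U_{ijt}$, this gives
\[
\Prob(V_{ijt}\le c\mid h)=\E\bigl[G(c+A_{ij})\mid Z_{ij}^{1:T}=h\bigr],
\]
which is free of $t$ by the homogeneous-marginal part of the assumption. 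Only the marginals enter this expression, so serial correlation of $U_{ij}^{1:T}$ is harmless.

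\textbf{Step 3 (conclusion) and main obstacle.} Combining Steps 1 and 2, $L_t(c\mid h;\theta_0)\le F(c\mid h)\le U_s(c\mid h;\theta_0)$ for all $t,s$. Taking the maximum over $t$ and the minimum over $s$ gives $\overline L(c\mid h;\theta_0)\le\underline U(c\mid h;\theta_0)$ for all $c$ and $h$. Hence $\theta_0\in\Theta_I^{\mathrm{dyad}}$.

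The main step needing care is Step 2. The conditioning argument must be justified: conditioning on $Z_{ij}^{1:T}$, which is a function of the conditioning variables that are jointly independent of $U$, preserves both the independence of $U_{ijt}$ from $A_{ij}$ and its marginal law. This is a standard property of joint independence, so I expect no real difficulty, only a careful statement.
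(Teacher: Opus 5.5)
Your proposal is correct and follows the same argument as the paper's proof. Both use three ingredients: the pointwise bounding-by-$c$ inequalities on $\{D_{ijt}=1,\ W_{ijt}(\theta_0)\le c\}$ and $\{D_{ijs}=0,\ W_{ijs}(\theta_0)\ge c\}$; the date-invariance of $\Prob(V_{ijt}\le c\mid Z_{ij}^{1:T}=h)=\E[G(c+A_{ij})\mid Z_{ij}^{1:T}=h]$, which follows from joint independence plus homogeneous marginals; and finally the maximum over $t$ and minimum over $s$. The only difference is that the paper establishes date-invariance before the bounds, which is immaterial.
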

	
	\begin{remark}[About Sharpness]
		\label{rem:sharpness}
		Proposition \ref{prop:dyad} shows that $\theta_0$ belongs to the displayed restriction set, but it does not claim that the set is sharp. Throughout this paper, we use ``identified set'' in this standard sense without claiming sharpness. Establishing sharpness in the present dynamic-network environment appears substantially harder and is left to future work.
	\end{remark}
	
	\begin{remark}[Role of the time dimension]
		\label{rem:time-dimension}
		All semiparametric results in this section require at least $T\ge 2$ time periods, since the identifying restrictions compare outcomes across distinct dates. A larger $T$ enlarges the class of available comparisons: additional dates contribute to the maximum over $t$ and minimum over $s$ in Proposition~\ref{prop:dyad}, and enlarge the class of admissible balanced signed subgraphs in Propositions~\ref{prop:dyad-transition}--\ref{prop:signed-subgraph}. This does not automatically imply monotone shrinkage of the identified set, since the conditioning objects also grow with $T$, but it does expand the set of identifying restrictions that can be brought to bear.
	\end{remark}

	\subsection{Signed Subgraph Identification}
	
	The signed-subgraph approach is closer to \cite*{GaoLiXu26}. It uses time as an additional differencing dimension and constructs events over edge-time cells so that fixed effects cancel algebraically. Because the network regressors are lagged, one can compare edge-time cells without confronting contemporaneous simultaneity. The key point is that the propositions below use only the exogeneity part of Assumption \ref{ass:eh}; they do \emph{not} use homogeneous marginals and therefore remain valid under arbitrary serial correlation. We begin with the smallest nontrivial case, a two-period transition for one dyad, and then state the general signed-subgraph version.

	\begin{proposition}[Dyad-transition inequalities]
		\label{prop:dyad-transition}
		Fix two dates $t \neq s$ and define
		\[
		\Delta_{ts}W_{ij}(\theta)
		:=
		W_{ijt}(\theta)-W_{ijs}(\theta),
		\quad
		\Delta_{ts}U_{ij}
		:=
		U_{ijt}-U_{ijs}, \quad \mathcal Z_{ij}^{1:T}
		:=
		\bigl({Z_i^{1:T}}',{Z_j^{1:T}}'\bigr)'.
		\]
		Under \eqref{eq:main-model} and Assumption \ref{ass:eh}, for every $c\in\R$ and every $z$ in the support of $\mathcal Z_{ij}^{1:T}$,
		\[
		\E\left[
		D_{ijt}(1-D_{ijs})
		\ind\{\Delta_{ts}W_{ij}(\theta_0)\le c\}
		\mid
		\mathcal Z_{ij}^{1:T}=z
		\right]
		\le
		\Prob\bigl(\Delta_{ts}U_{ij}<c\bigr),
		\]
		and
		\[
		\E\left[
		(1-D_{ijt})D_{ijs}
		\ind\{\Delta_{ts}W_{ij}(\theta_0)\ge c\}
		\mid
		\mathcal Z_{ij}^{1:T}=z
		\right]
		\le
		\Prob\bigl(\Delta_{ts}U_{ij}>c\bigr).
		\]
		Consequently,
		\[
		\begin{aligned}
			\sup_z\,
			\E\left[
			D_{ijt}(1-D_{ijs})
			\ind\{\Delta_{ts}W_{ij}(\theta_0)\le c\}
			\mid
			\mathcal Z_{ij}^{1:T}=z
			\right]
			\le\;&
			\inf_z
			\Bigl[
			1-
			\E\left[
			(1-D_{ijt})D_{ijs}
			\right.\\
			&\left.
			\ind\{\Delta_{ts}W_{ij}(\theta_0)\ge c\}
			\mid
			\mathcal Z_{ij}^{1:T}=z
			\right]
			\Bigr].
		\end{aligned}
		\]
	\end{proposition}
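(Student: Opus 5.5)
The plan is to prove the two displayed inequalities pointwise on the event where both transitions are observed, using the ``bounding-by-$c$'' device, and then take expectations. For the first inequality, I will show that on the event $\{D_{ijt}=1,\ D_{ijs}=0\}$ the fixed effect cancels. By \eqref{eq:main-model}, $D_{ijt}=1$ gives $U_{ijt}\le W_{ijt}(\theta_0)+A_{ij}$, and $D_{ijs}=0$ gives $U_{ijs}> W_{ijs}(\theta_0)+A_{ij}$. Subtracting the second from the first, $A_{ij}$ drops out and
\[
\Delta_{ts}U_{ij} < \Delta_{ts}W_{ij}(\theta_0).
\]
If in addition $\Delta_{ts}W_{ij}(\theta_0)\le c$, then $\Delta_{ts}U_{ij}<c$. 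This yields the pointwise bound
\[
D_{ijt}(1-D_{ijs})\ind\{\Delta_{ts}W_{ij}(\theta_0)\le c\}\le \ind\{\Delta_{ts}U_{ij}<c\}.
\]
The strict inequality comes from the strict inequality $U_{ijs}>\cdot$ in the $D=0$ branch. The second inequality follows symmetrically: on $\{D_{ijt}=0,\ D_{ijs}=1\}$ we get $\Delta_{ts}U_{ij}>\Delta_{ts}W_{ij}(\theta_0)\ge c$.

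Next I take conditional expectations given $\mathcal Z_{ij}^{1:T}=z$. The left side becomes the displayed moment, and the right side becomes $\Prob(\Delta_{ts}U_{ij}<c\mid \mathcal Z_{ij}^{1:T}=z)$. The main step, and the only delicate one, is to argue that this conditional probability equals the unconditional $\Prob(\Delta_{ts}U_{ij}<c)$. This is exactly where the bounding-by-$c$ trick pays off. The endogenous lagged regressors $X_{ij,t-1},X_{ij,s-1}$, which depend on past shocks of $ij$ and of other dyads, have been removed from the right-hand side, so only the shock vector $U_{ij}^{1:T}$ remains. By Assumption \ref{ass:eh}, $U_{ij}^{1:T}$ is independent of the full covariate array $\{Z_k^{1:T}\}_k$ and hence of $\mathcal Z_{ij}^{1:T}$, which is a measurable function of it. I will also note explicitly that no homogeneity of marginals over time is used and that serial correlation within $U_{ij}^{1:T}$ is harmless, since only the distribution of the scalar $\Delta_{ts}U_{ij}$ enters. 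Regular conditional versions, as stipulated after Assumption \ref{ass:eh}, make conditioning on exact $z$ legitimate.

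Finally, for the ``consequently'' statement, I rewrite the second inequality as
\[
\Prob(\Delta_{ts}U_{ij}\le c)\le 1-\E\left[(1-D_{ijt})D_{ijs}\ind\{\Delta_{ts}W_{ij}(\theta_0)\ge c\}\mid \mathcal Z_{ij}^{1:T}=z\right].
\]
Combining this with $\Prob(\Delta_{ts}U_{ij}<c)\le\Prob(\Delta_{ts}U_{ij}\le c)$ sandwiches a $z$-free quantity between the two conditional moments. Taking the supremum over $z$ on the left and the infimum over $z'$ on the right (a separate conditioning value) gives the final display. I expect no real obstacle beyond stating the independence step carefully, in particular emphasizing that the event on the right involves only $U_{ij}^{1:T}$ and never the lagged network.
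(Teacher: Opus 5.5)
Your proposal is correct and follows the paper's proof essentially step for step. Both arguments cancel $A_{ij}$ pointwise on the transition events, with strictness coming from the $D=0$ branch, and then condition on $\mathcal Z_{ij}^{1:T}=z$ using only the exogeneity part of Assumption \ref{ass:eh}. Both then use the chain $\Prob(\Delta_{ts}U_{ij}<c)\le\Prob(\Delta_{ts}U_{ij}\le c)=1-\Prob(\Delta_{ts}U_{ij}>c)$ and finish by taking the supremum and infimum over $z$.
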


	Proposition \ref{prop:dyad-transition} is the simplest dynamic analog of the Gao-Li-Xu differencing logic. The dyad effect $A_{ij}$ appears once with a positive sign and once with a negative sign, so it cancels exactly. The conditioning is only on exogenous $Z$ histories; the lagged network vector $X_{ij,t-1}$ remains inside the random index difference and need not be conditioned on. Call a triple $(i,j,t)$ with $i<j$ and $t\in\{1,\ldots,T\}$ an \emph{edge-time cell}. For any finite collection $\mathcal C$ of edge-time cells, let $N(\mathcal C)$ denote the set of nodes appearing in $\mathcal C$, and define the corresponding exogenous history vector by
	\[
	\mathcal Z_{\mathcal C}^{1:T}
	:=
	\bigl({Z_m^{1:T}}'\bigr)_{m\in N(\mathcal C)}'.
	\]

	\begin{proposition}[Dynamic signed-subgraph inequalities]
		\label{prop:signed-subgraph}
		Let $\mathcal C^{+}$ and $\mathcal C^{-}$ be nonempty finite collections of edge-time cells. Suppose they are \emph{balanced} in the sense that for every dyad $(i,j)$,
		\[
		\#\{t:(i,j,t)\in\mathcal C^{+}\}
		=
		\#\{t:(i,j,t)\in\mathcal C^{-}\}.
		\]
		Define
		$$Y_{\mathcal C}^+ : = \prod_{e\in\mathcal C^+} D_e
		\prod_{e\in\mathcal C^-} (1-D_e),\quad Y_{\mathcal C}^- : =\prod_{e\in\mathcal C^+} (1-D_e)
		\prod_{e\in\mathcal C^-} D_e,$$
		where $D_e$ denotes the link indicator attached to cell $e$. Also define
		\[
		\Delta_{\mathcal C}W(\theta)
		:=
		\sum_{(i,j,t)\in\mathcal C^{+}} W_{ijt}(\theta)
		-
		\sum_{(i,j,t)\in\mathcal C^{-}} W_{ijt}(\theta),\quad \Delta_{\mathcal C}U
		:=
		\sum_{(i,j,t)\in\mathcal C^{+}} U_{ijt}
		-
		\sum_{(i,j,t)\in\mathcal C^{-}} U_{ijt}.
		\]
		Under \eqref{eq:main-model} and Assumption \ref{ass:eh}, for every $c\in\R$,
		\[
		\sup_z
		\E\left[
		Y_{\mathcal C}^+
		\ind\{\Delta_{\mathcal C}W(\theta_0)\le c\}
		\mid
		\mathcal Z_{\mathcal C}^{1:T}=z
		\right]
		\le
		\inf_z
		\left[
		1-
		\E\left[
		Y_{\mathcal C}^-
		\ind\{\Delta_{\mathcal C}W(\theta_0)\ge c\}
		\mid
		\mathcal Z_{\mathcal C}^{1:T}=z
		\right]
		\right].
		\]
	\end{proposition}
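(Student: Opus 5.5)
The argument mirrors Proposition~\ref{prop:dyad-transition}, with the two cells replaced by the balanced collections $\mathcal C^{+}$ and $\mathcal C^{-}$. The first step is a pointwise implication that eliminates the fixed effects. On the event $Y_{\mathcal C}^{+}=1$ we have $D_e=1$ for every $e\in\mathcal C^{+}$, so $U_{ijt}\le W_{ijt}(\theta_0)+A_{ij}$ for each $(i,j,t)\in\mathcal C^{+}$. We also have $D_e=0$ for every $e\in\mathcal C^{-}$, so $U_{ijt}> W_{ijt}(\theta_0)+A_{ij}$ there. Summing the first set of inequalities and subtracting the second gives
\[
\Delta_{\mathcal C}U < \Delta_{\mathcal C}W(\theta_0) + \sum_{(i,j,t)\in\mathcal C^{+}}A_{ij}-\sum_{(i,j,t)\in\mathcal C^{-}}A_{ij}.
\]
The inequality is strict because $\mathcal C^{-}$ is nonempty. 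Balancedness says each dyad appears the same number of times in $\mathcal C^{+}$ and in $\mathcal C^{-}$, so the $A$-terms cancel exactly. Hence $Y_{\mathcal C}^{+}\ind\{\Delta_{\mathcal C}W(\theta_0)\le c\}\le \ind\{\Delta_{\mathcal C}U<c\}$. Symmetrically, on $Y_{\mathcal C}^{-}=1$ we get $\Delta_{\mathcal C}U>\Delta_{\mathcal C}W(\theta_0)$, so $Y_{\mathcal C}^{-}\ind\{\Delta_{\mathcal C}W(\theta_0)\ge c\}\le \ind\{\Delta_{\mathcal C}U>c\}$.

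The second step is to take conditional expectations given $\mathcal Z_{\mathcal C}^{1:T}=z$. This yields upper bounds $\Prob(\Delta_{\mathcal C}U<c\mid \mathcal Z_{\mathcal C}^{1:T}=z)$ and $\Prob(\Delta_{\mathcal C}U>c\mid \mathcal Z_{\mathcal C}^{1:T}=z)$. Under Assumption~\ref{ass:eh}, the whole shock array $\{U_{ij}^{1:T}\}$ is jointly independent of $(A,Z)$. Since $\Delta_{\mathcal C}U$ is a fixed linear function of shocks of finitely many dyads, its conditional law given $\mathcal Z_{\mathcal C}^{1:T}=z$ equals its unconditional law, which does not depend on $z$. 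So $\E[Y_{\mathcal C}^{+}\ind\{\cdot\le c\}\mid z]\le \Prob(\Delta_{\mathcal C}U<c)$ and $\E[Y_{\mathcal C}^{-}\ind\{\cdot\ge c\}\mid z]\le \Prob(\Delta_{\mathcal C}U>c)\le 1-\Prob(\Delta_{\mathcal C}U<c)$.

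The third step chains these bounds. For any two support points $z,z'$,
\[
\E\left[Y_{\mathcal C}^{+}\ind\{\Delta_{\mathcal C}W(\theta_0)\le c\}\mid z\right]\le \Prob(\Delta_{\mathcal C}U<c)\le 1-\E\left[Y_{\mathcal C}^{-}\ind\{\Delta_{\mathcal C}W(\theta_0)\ge c\}\mid z'\right].
\]
Taking the supremum over $z$ on the left and the infimum over $z'$ on the right gives the displayed inequality. As in the paper's convention, these are essential sup/inf over regular conditional distributions. No homogeneity of marginals or serial independence is used, consistent with the remark preceding Proposition~\ref{prop:dyad-transition}.

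The main point needing care is the second step, because the indicator $\Delta_{\mathcal C}W(\theta_0)$ involves lagged network statistics $X_{ij,t-1}$ that depend on past shocks. The plan therefore never conditions on $X$. The bound is pointwise in $\omega$, so the endogenous lagged regressors enter only on the left-hand side, inside the event being bounded. Independence is invoked only for $\Delta_{\mathcal C}U$ against $\mathcal Z_{\mathcal C}^{1:T}$, and that follows from the joint independence in Assumption~\ref{ass:eh}; no independence between $U$ and $X$ is needed. I also need to check that strict versus weak inequalities line up when $\Delta_{\mathcal C}U$ has atoms. Using $<c$ and $>c$ as above keeps the two events disjoint, so their probabilities sum to at most one, and that is what the chaining step requires.
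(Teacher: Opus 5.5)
Your proposal is correct and follows the paper's proof essentially step for step. Both arguments use the same three moves: the pointwise bound on $Y_{\mathcal C}^{\pm}=1$ in which the dyad effects cancel by balance, then the shock--covariate independence in Assumption~\ref{ass:eh} to make the law of $\Delta_{\mathcal C}U$ free of $z$, and finally the sup/inf chaining. The only cosmetic difference is that the paper chains through $\Prob(\Delta_{\mathcal C}U\le c)=1-\Prob(\Delta_{\mathcal C}U>c)$, whereas you chain through $\Prob(\Delta_{\mathcal C}U<c)\le 1-\Prob(\Delta_{\mathcal C}U>c)$, which is equally valid.
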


	\begin{remark}
		\label{rem:signed-subgraph}
		Proposition \ref{prop:signed-subgraph} is the direct dynamic analog of the Gao-Li-Xu subgraph argument. Proposition \ref{prop:dyad-transition} is its two-cell special case, obtained by taking $\mathcal C^+=\{(i,j,t)\}$ and $\mathcal C^-=\{(i,j,s)\}$. Dyad transitions are the smallest balanced signed subgraphs, but richer dynamic objects are possible. One can mix cross-sectional and intertemporal differencing in the same construction, provided each dyad appears with net sign zero. Because the network covariates are lagged, the isolation arguments that are required in simultaneous static strategic models are not necessary here. Unlike Proposition \ref{prop:dyad}, these signed-subgraph inequalities do not use homogeneous marginals over time. Like Proposition \ref{prop:dyad}, they continue to hold under arbitrary serial correlation of the pairwise shock process.
		
		Analogously to Proposition \ref{prop:dyad}, define
		\[
		\Theta_I^{\mathrm{subgraph}}
		:=
		\left\{
		\theta:
		\text{the inequality in Proposition \ref{prop:signed-subgraph} holds for all balanced } (\mathcal C^+,\mathcal C^-) \text{ and all } c\in\R
		\right\}.
		\]
		Then $\theta_0\in\Theta_I^{\mathrm{dyad}}\cap\Theta_I^{\mathrm{subgraph}}$, and the two identified sets are in general not nested.
	\end{remark}
	
	\subsection{Unified Partial-Differencing Perspective}
	
	The two semiparametric approaches above can be viewed as extreme points of a broader spectrum:
	\[
	\begin{tikzpicture}[x=1cm,y=1cm,baseline=(current bounding box.center)]
		\draw[thick] (-5,0) -- (5,0);
		\fill (-5,0) circle (1.7pt);
		\fill (5,0) circle (1.7pt);
		\node[below=7pt,align=center,text width=2.8cm] at (-5,0) {complete\\differencing out};
		\node[below=7pt,align=center,text width=2.8cm] at (5,0) {complete\\integrating\\ out};
		\node[above=8pt,align=center,text width=7.2cm] at (0,0)
		{partial differencing / partial integration};
	\end{tikzpicture}
	\]
	The basic object is a signed comparison over edge-time cells in which some fixed-effect components cancel algebraically, while the remaining components are absorbed into a common latent CDF.
	
	The clean economic interpretation is exactly a split between two roles: First, the \emph{differenced-out parts}. These are the dyad components whose fixed effects cancel algebraically. For them, one only needs the exogeneity part of Assumption \ref{ass:eh}. Their exogenous histories may therefore be conditioned on freely and then profiled out through sup/inf operations. Second, the \emph{integrated-out parts}. These are the dyad components whose fixed effects do not cancel. For them, one relies on the homogeneity/common-law part of Assumption \ref{ass:eh}. Their residual contribution is absorbed into a latent CDF that is held fixed while one takes envelopes over admissible comparison objects.
	
	Define a comparison object as an ordered pair $g: = (\mathcal C_g^+,\mathcal C_g^-)$, where $\mathcal C_g^+$ and $\mathcal C_g^-$ are finite collections of edge-time cells indexed by $g$. Define
	\[
	Y_g^+
	:=
	\prod_{e\in\mathcal C_g^+} D_e
	\prod_{e\in\mathcal C_g^-} (1-D_e),
	\quad
	Y_g^-
	:=
	\prod_{e\in\mathcal C_g^+} (1-D_e)
	\prod_{e\in\mathcal C_g^-} D_e,
	\]
	\[
	\Delta_gW(\theta)
	:=
	\sum_{e\in\mathcal C_g^+} W_e(\theta)
	-
	\sum_{e\in\mathcal C_g^-} W_e(\theta),
	\quad
	\Delta_gU
	:=
	\sum_{e\in\mathcal C_g^+} U_e
	-
	\sum_{e\in\mathcal C_g^-} U_e.
	\]
	For each dyad $(i,j)$, let
	\[
	\rho_g(i,j)
	:=
	\#\{t:(i,j,t)\in\mathcal C_g^+\}
	-
	\#\{t:(i,j,t)\in\mathcal C_g^-\}.
	\]
	Also, define the residual-dyad set and the vector of dyadic-covariate histories for the uncanceled dyads
	\[
	\mathcal R_g := \{(i,j):\rho_g(i,j)\neq 0\},\quad Z_{\mathcal R_g}^{1:T}
	:=
	\bigl({Z_{ij}^{1:T}}'\bigr)'_{(i,j)\in\mathcal R_g}.
	\]
	
	Assume that for each $g\in\mathcal G$, both $\mathcal C_g^+$ and $\mathcal C_g^-$ are nonempty. On the event $Y_g^+=1$,
	\[
	\Delta_gU
	<
	\Delta_gW(\theta_0)
	+
	\sum_{(i,j)\in\mathcal R_g}\rho_g(i,j)A_{ij},
	\]
	and on $Y_g^-=1$ the reverse strict inequality holds. Thus, if one defines
	\[
	M_g
	:=
	\Delta_gU
	-
	\sum_{(i,j)\in\mathcal R_g}\rho_g(i,j)A_{ij},
	\]
	then $Y_g^+=1$ implies $M_g<\Delta_gW(\theta_0)$ and $Y_g^-=1$ implies $M_g>\Delta_gW(\theta_0)$.
	
	\begin{proposition}[Partial-differencing envelope within a fixed residual-load class]
		\label{prop:partial-spectrum}
		Let $\mathcal G$ be a family of comparison objects $g$ such that:
		\begin{enumerate}
			\item all $g\in\mathcal G$ have the same residual-load vector $\rho_g=\rho$ and hence the same residual-dyad set $\mathcal R$;
			\item for each $g\in\mathcal G$, one can partition the observable exogenous histories entering $g$ into a retained component $S_g$ and a nuisance component $T_g$;
			\item the retained component is common across $g\in\mathcal G$, in the sense that $S_g=S_{\mathcal R}$ for all $g$, where $S_{\mathcal R}$ is built from the exogenous histories of the uncanceled dyads in $\mathcal R$;
			\item conditional on $S_{\mathcal R}=s$, the distribution of $M_g$ is common across $g\in\mathcal G$ and does not depend on $T_g$.
		\end{enumerate}
		Then for every $c\in\R$ and every $s$ in the support of $S_{\mathcal R}$, there exists a CDF $F(c\mid s)$ such that
		\[
		\sup_{g\in\mathcal G}
		\sup_{t\in\Supp(T_g\mid S_{\mathcal R}=s)}
		\E\left[
		Y_g^+
		\ind\{\Delta_gW(\theta_0)\le c\}
		\mid
		S_{\mathcal R}=s,\;
		T_g=t
		\right]
		\le
		F(c\mid s),
		\]
		and
		\[
		F(c\mid s)
		\le
		\inf_{g\in\mathcal G}
		\inf_{t\in\Supp(T_g\mid S_{\mathcal R}=s)}
		\left[
		1-
		\E\left[
		Y_g^-
		\ind\{\Delta_gW(\theta_0)\ge c\}
		\mid
		S_{\mathcal R}=s,\;
		T_g=t
		\right]
		\right].
		\]
	\end{proposition}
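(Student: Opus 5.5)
The plan follows the same ``bounding-by-$c$'' logic as Propositions~\ref{prop:dyad} and~\ref{prop:signed-subgraph}, with the latent variable $M_g$ playing the role that $V_{ijt}$ or $\Delta_{\mathcal C}U$ played there. The candidate CDF is
\[
F(c\mid s):=\Prob\bigl(M_g\le c\mid S_{\mathcal R}=s\bigr),
\]
computed for any fixed $g\in\mathcal G$. Condition 4 makes this object well defined: it does not depend on the choice of $g$, and it does not change if we further condition on $T_g=t$ for any $t\in\Supp(T_g\mid S_{\mathcal R}=s)$. So $F(c\mid s)$ is a genuine CDF in $c$ and is common to the whole family.

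\textbf{Upper envelope.} Fix $g$, $s$, $t$ and $c$. By the display preceding the proposition, $Y_g^+=1$ implies $M_g<\Delta_gW(\theta_0)$. On the event $\{\Delta_gW(\theta_0)\le c\}$ this gives $M_g<c$. Hence, pointwise,
\[
Y_g^+\ind\{\Delta_gW(\theta_0)\le c\}\le \ind\{M_g\le c\}.
\]
Take expectations conditional on $(S_{\mathcal R}=s,T_g=t)$ and invoke condition 4 to replace the right side by $F(c\mid s)$. The bound holds for every $g$ and $t$, so taking suprema gives the first inequality.

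\textbf{Lower envelope.} Symmetrically, $Y_g^-=1$ together with $\Delta_gW(\theta_0)\ge c$ gives $M_g>c$. Hence
\[
Y_g^-\ind\{\Delta_gW(\theta_0)\ge c\}\le \ind\{M_g>c\}=1-\ind\{M_g\le c\}.
\]
Conditioning on $(S_{\mathcal R}=s,T_g=t)$ and using condition 4 again gives
\[
F(c\mid s)\le 1-\E\bigl[Y_g^-\ind\{\Delta_gW(\theta_0)\ge c\}\mid S_{\mathcal R}=s,\,T_g=t\bigr],
\]
and taking infima over $g$ and $t$ gives the second inequality. Nonemptiness of $\mathcal C_g^\pm$ is used only so that $Y_g^\pm$ and the implications about $M_g$ are meaningful as in the preceding display.

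\textbf{Main subtlety.} The steps involving $\Delta_gW(\theta_0)$ are not the difficulty. This index contains the lagged network covariates, which are endogenous and correlated with $M_g$. The bounding-by-$c$ step avoids conditioning on them: the pointwise inequalities hold on every realization, so no independence between $\Delta_gW(\theta_0)$ and $M_g$ is needed. The real content is in condition 4, which is simply assumed here. The only care required is measure-theoretic. The suprema and infima over $t$ are over versions of regular conditional expectations, and we use the paper's convention that these are essential suprema and infima. The conditional law of $M_g$ must be read as invariant in $t$ almost surely, so that the pointwise bound transfers to every $t$ in the support.
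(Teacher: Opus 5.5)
Your proposal is correct and follows the paper's proof essentially line by line. It uses the same choice $F(c\mid s)=\Prob(M_g\le c\mid S_{\mathcal R}=s)$, made well defined by condition 4, the same pointwise indicator bounds from $M_g<\Delta_gW(\theta_0)$ on $Y_g^+=1$ and $M_g>\Delta_gW(\theta_0)$ on $Y_g^-=1$, and the same conditioning on $(S_{\mathcal R},T_g)=(s,t)$ followed by suprema and infima over $g$ and $t$ (bounding directly by $\ind\{M_g\le c\}$ rather than $\ind\{M_g<c\}$ is immaterial). One small point: nonemptiness of $\mathcal C_g^{+}$ is what makes $M_g>\Delta_gW(\theta_0)$ strict on $Y_g^-=1$. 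Your lower envelope needs that strictness to deliver $1-F(c\mid s)$ rather than $\Prob(M_g\ge c\mid S_{\mathcal R}=s)$, so nonemptiness does more than make the objects ``meaningful.''
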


	\begin{remark}
		\label{rem:partial-spectrum}
		Proposition \ref{prop:partial-spectrum} provides a taxonomic framework that nests the two semiparametric approaches developed above, but it does so \emph{within a fixed residual-load class}. That is, the proposition pools only over comparison objects that leave the same uncanceled dyads with the same residual coefficients. Different residual-load vectors generate different retained conditioning objects and therefore different envelope inequalities; the overall identified set is obtained by intersecting the restrictions from those separate classes. 
		
		First, the \emph{complete integration out} class. Take $\mathcal G=\{g_t:t=1,\ldots,T\}$, where $g_t$ is the one-cell comparison object built from edge-time cell $(i,j,t)$, so that $\mathcal C_{g_t}^+=\{(i,j,t)\}$ and $\mathcal C_{g_t}^-=\varnothing$. Then $\rho(i,j)=1$, so no dyad effect is canceled. Set $S_{\mathcal R}=Z_{ij}^{1:T}$ and let $T_g$ be empty. The common conditional law of $M_{g_t}=U_{ijt}-A_{ij}$ given $Z_{ij}^{1:T}=h$ follows from the joint independence and the homogeneous-marginal part of Assumption \ref{ass:eh}. Although these one-cell comparison objects have $\mathcal C_{g_t}^-=\varnothing$ and hence fall outside the strict-inequality setting of the proposition, the resulting weak-inequality bounds are still valid and recover Proposition \ref{prop:dyad}.
		
		Second, the \emph{complete differencing out} class. Take $\mathcal G=\{g\}$ with $\mathcal C_g^+=\{(i,j,t)\}$ and $\mathcal C_g^-=\{(i,j,s)\}$. Then $\rho\equiv 0$, so the dyad effect is canceled completely and $M_g=\Delta_{ts}U_{ij}$. Set $S_{\mathcal R}$ to be degenerate and let $T_g=\mathcal Z_{ij}^{1:T}$. This gives Proposition \ref{prop:dyad-transition}. More generally, any balanced signed subgraph has $\rho\equiv 0$ and falls under Proposition \ref{prop:signed-subgraph}.
		
		Lastly, the \emph{partial differencing / partial integration} class. For any fixed class of intermediate comparison objects with the same residual-load vector $\rho$, the zero-load dyads are differenced out, while the nonzero-load dyads are integrated out through the unknown CDF $F(c\mid s)$. This provides an organizing perspective under arbitrary dyad effects.
		
		In the notation of Proposition \ref{prop:partial-spectrum}, $T_g$ should be read as the exogenous histories attached to the differenced-out pieces, while $S_{\mathcal R}$ should be read as the exogenous histories attached to the absorbed pieces. Exogeneity lets one condition on $T_g$ and then profile over it, whereas homogeneity/common-law restrictions are used to compare the latent CDF indexed by $S_{\mathcal R}$ across comparison objects. This unified view explains why the dyad-panel and signed-subgraph approaches are complementary rather than redundant. The dyad-panel approach sits at the ``fully integrated'' end of the spectrum, while the signed-subgraph approach sits at the ``fully differenced'' end. Intermediate partial-differencing designs lie between those extremes, but each residual-load class contributes its own envelope inequality rather than all classes pooling into a single common CDF. The later strengthenings below move along the same spectrum by making some composite-error CDFs explicit or by enlarging the class of admissible partial-differencing designs.
	\end{remark}
	
	\section{Sharper Identification under Additional Structures}\label{Section: Sharper Identification under Additional Structures}
	
	Section \ref{Section: Semiparametric Identification} imposed neither parametric knowledge of the shock process nor additional structure on $A_{ij}$. Two strengthenings are especially useful. First, if the common marginal CDF of $U_{ijt}$ is known and the shock process is serially independent, then every fully differenced comparison has a known composite-error CDF and the bounding inequalities become explicit. Second, the additive-node structure $A_{ij}=\nu_i+\nu_j$ enlarges the class of valid weighted-differencing arguments even when the CDF is unknown.
	
	\subsection{Known Marginal CDF and Serial Independence}
	
	Suppose now that the common marginal CDF of $U_{ijt}$ is known, continuous, and denoted by $F_U$. Suppose in addition that the shock process $U_{ijt}$ is serially independent within each dyad $(i,j)$. Because Assumption \ref{ass:eh} already gives i.i.d.\ shock vectors across dyads, this strengthening implies independence across all distinct edge-time cells. Therefore every differencing design that fully removes the relevant fixed effects produces a composite error with known CDF.
	
	\begin{proposition}[Explicit bounds under a known marginal CDF and serial independence]
		\label{prop:known-composite}
		Suppose Assumption \ref{ass:eh} holds, the common marginal CDF $F_U$ of $U_{ijt}$ is known and continuous, and $U_{ij1},\ldots,U_{ijT}$ are independent for every dyad $(i,j)$.
		\begin{enumerate}
			\item[(1)] For any pair of dates $t\neq s$, define
			\[
			\Delta_{ts}U_{ij}
			:=
			U_{ijt}-U_{ijs},
			\quad
			F_{\Delta}(c)
			:=
			\Prob(\Delta_{ts}U_{ij}\le c)
			=
			\int F_U(c+u)\, dF_U(u).
			\]
			Then
			\[
			\sup_z\,
			\E\left[
			D_{ijt}(1-D_{ijs})
			\ind\{\Delta_{ts}W_{ij}(\theta_0)\le c\}
			\mid
			\mathcal Z_{ij}^{1:T}=z
			\right]
			\le
			F_{\Delta}(c)
			\]
			\[
			\le
			\inf_z
			\left[
			1-
			\E\left[
			(1-D_{ijt})D_{ijs}
			\ind\{\Delta_{ts}W_{ij}(\theta_0)\ge c\}
			\mid
			\mathcal Z_{ij}^{1:T}=z
			\right]
			\right].
			\]
			\item[(2)] For any balanced signed subgraph $(\mathcal C^+,\mathcal C^-)$ as in Proposition \ref{prop:signed-subgraph}, define
			\[
			\Delta_{\mathcal C}U
			:=
			\sum_{e\in\mathcal C^+} U_e
			-
			\sum_{e\in\mathcal C^-} U_e,
			\quad
			F_{\mathcal C}(c)
			:=
			\Prob(\Delta_{\mathcal C}U\le c).
			\]
			Then $F_{\mathcal C}$ is known from $F_U$; specifically, it is the convolution of $|\mathcal C^+|$ copies of $F_U$ and $|\mathcal C^-|$ copies of the reflected CDF $F_{-U}(x):=\Prob(-U_{ijt}\le x)$. Moreover,
			\[
			\sup_z
			\E\left[
			Y_{\mathcal C}^+
			\ind\{\Delta_{\mathcal C}W(\theta_0)\le c\}
			\mid
			\mathcal Z_{\mathcal C}^{1:T}=z
			\right]
			\le
			F_{\mathcal C}(c)
			\]
			\[
			\le
			\inf_z
			\left[
			1-
			\E\left[
			Y_{\mathcal C}^-
			\ind\{\Delta_{\mathcal C}W(\theta_0)\ge c\}
			\mid
			\mathcal Z_{\mathcal C}^{1:T}=z
			\right]
			\right].
			\]
		\end{enumerate}
	\end{proposition}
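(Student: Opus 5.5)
The plan is to reduce both parts to Propositions \ref{prop:dyad-transition} and \ref{prop:signed-subgraph}. The only new ingredient is that the right-hand sides there, $\Prob(\Delta_{\mathcal C}U<c)$ and $\Prob(\Delta_{\mathcal C}U>c)$, are now known functions of $F_U$. Part (1) is the special case $\mathcal C^+=\{(i,j,t)\}$, $\mathcal C^-=\{(i,j,s)\}$ of part (2), as in Remark \ref{rem:signed-subgraph}. So I will prove (2) and then specialize.

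First I verify the upper bound, tracking the argument behind Proposition \ref{prop:signed-subgraph}. Balancedness means each dyad appears equally often in $\mathcal C^+$ and $\mathcal C^-$, so the fixed effects $A_{ij}$ cancel in the signed sum. On the event $Y_{\mathcal C}^+=1$, summing $U_e\le W_e(\theta_0)+A_e$ over $\mathcal C^+$ and $U_e>W_e(\theta_0)+A_e$ over $\mathcal C^-$ gives
\[
\Delta_{\mathcal C}U<\Delta_{\mathcal C}W(\theta_0),
\]
since $\mathcal C^-$ is nonempty. Hence
\[
Y_{\mathcal C}^+\ind\{\Delta_{\mathcal C}W(\theta_0)\le c\}\le\ind\{\Delta_{\mathcal C}U<c\}.
\]
Taking conditional expectations given $\mathcal Z_{\mathcal C}^{1:T}=z$ and using the independence in Assumption \ref{ass:eh}, the right side equals $\Prob(\Delta_{\mathcal C}U<c)\le F_{\mathcal C}(c)$. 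The symmetric argument on $Y_{\mathcal C}^-=1$ gives an expectation at most $\Prob(\Delta_{\mathcal C}U>c)=1-F_{\mathcal C}(c)$. Rearranging and taking $\sup_z$ and $\inf_z$ yields the two displayed inequalities with $F_{\mathcal C}(c)$ sandwiched between them. These steps require no continuity, since strictness is only discarded.

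Second, I identify $F_{\mathcal C}$. The cells in $\mathcal C^+\cup\mathcal C^-$ are distinct edge-time cells. I should check this: a cell in both sets could be removed from each without changing anything, or else it should be read as a multiset. Independence across dyads (Assumption \ref{ass:eh}) combined with serial independence within a dyad gives mutual independence of the $U_e$ across cells; this needs the joint law of the i.i.d.\ vectors $U_{ij}^{1:T}$ to factor into its marginals. Each $U_e$ has marginal $F_U$ by time-homogeneity, and each $-U_e$ has marginal $F_{-U}(x)=1-F_U(-x)$, using the continuity of $F_U$. The law of $\Delta_{\mathcal C}U$ is therefore the stated convolution and depends only on $F_U$. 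For (1), $F_\Delta(c)=\Prob(U_{ijt}\le c+U_{ijs})$, and conditioning on $U_{ijs}$ gives $\int F_U(c+u)\,dF_U(u)$.

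The main obstacle is the bookkeeping around independence of the cells. In particular, I need the conditioning on $\mathcal Z_{\mathcal C}^{1:T}$ to leave the law of $\Delta_{\mathcal C}U$ unchanged even though the $D_e$ and the lagged $X$ depend on past $U$'s. This is handled by the fact that, after the bounding step, the event $\{\Delta_{\mathcal C}U<c\}$ involves only the $U$'s, which are independent of $(A,Z)$ jointly, so the endogenous lagged covariates never enter.
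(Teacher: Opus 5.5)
Your proposal is correct and follows essentially the same route as the paper, which cites Propositions~\ref{prop:dyad-transition} and~\ref{prop:signed-subgraph} for the bounding step and then uses independence across edge-time cells to get the convolution form of $F_{\mathcal C}$. Your remark that continuity is not needed for the sandwich is right; the paper uses it only to equate $\Prob(\Delta_{\mathcal C}U<c)$ with $F_{\mathcal C}(c)$. On your overlap check: a cell in $\mathcal C^+\cap\mathcal C^-$ makes $Y_{\mathcal C}^{+}\equiv Y_{\mathcal C}^{-}\equiv 0$, so the sandwich holds trivially, but removing that cell does change $Y_{\mathcal C}^{\pm}$. The convolution description genuinely requires $\mathcal C^+\cap\mathcal C^-=\varnothing$, which the paper assumes only implicitly.
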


	\begin{remark}
		\label{rem:known-composite}
		Proposition \ref{prop:known-composite} is the most direct nonlogit sharpening available in the present setting. The gain comes from combining a known marginal CDF with serial independence, not from the marginal CDF alone. Once fixed effects are fully differenced out, the middle term in the lower/upper sandwich is pinned down by a known composite-error distribution. Logit remains distinct for a different reason: under additive node effects it yields an exact conditional-logit representation with algebraic cancellation of the fixed effects.
		
		There is also a useful max-score-type special case. Because $\Delta_{ts}U_{ij}=U_{ijt}-U_{ijs}$ is the difference of two i.i.d.\ continuous variables, its CDF is symmetric around zero and satisfies
		\[
		F_{\Delta}(0)=\frac{1}{2}.
		\]
		Hence Proposition \ref{prop:known-composite} implies
		\[
		\sup_z\,
		\E\left[
		D_{ijt}(1-D_{ijs})
		\ind\{\Delta_{ts}W_{ij}(\theta_0)\le 0\}
		\mid
		\mathcal Z_{ij}^{1:T}=z
		\right]
		\le
		\frac{1}{2},
		\]
		\[
		\sup_z\,
		\E\left[
		(1-D_{ijt})D_{ijs}
		\ind\{\Delta_{ts}W_{ij}(\theta_0)\ge 0\}
		\mid
		\mathcal Z_{ij}^{1:T}=z
		\right]
		\le
		\frac{1}{2}.
		\]
		This is the appropriate dynamic analog of the maximum-score-type special case in \citep{GaoWang26}, so it is best viewed as a sharpening of the dyad-transition bounds under serial independence, in the spirit of \citet{GaoWang26}, rather than as a separate identification result.
	\end{remark}
	
	\subsection{Additive Node Effects with Unknown CDF}
	
	Now suppose instead that the dyad effect is additive in nodes:
	\[
	A_{ij}=\nu_i+\nu_j.
	\]
	This assumption alone sharpens the semiparametric analysis because weighted differencing can now be organized around \emph{nodes} rather than \emph{dyads}. The admissible class of weighted configurations is therefore much larger than the dyad-balanced signed subgraphs used under unrestricted dyad effects.
	
	\begin{assumption}[Additive node effects and exchangeable node types]
		\label{ass:additive}
		Assume that (i) $A_{ij}=\nu_i+\nu_j$, (ii) the node histories $\{(\nu_i,Z_i^{1:T}): i=1,2,\ldots\}$ are i.i.d.\ across $i$, and (iii) the dyad-level shock process is independent of the full node history $\{(\nu_i,Z_i^{1:T}): i=1,\ldots,n\}$.
		
	\end{assumption}
	
	Let $\mathcal C$ be a finite nonempty collection of edge-time cells $e=(i,j,t)$, let $\omega_e\neq 0$ be an associated real weight, and let $\dot{e} = \{i,j\}$ be the set of nodes appearing in the dyad component of cell $e$. Define the positive and negative cells
	\[
	\mathcal C^+ := \{e\in\mathcal C:\omega_e>0\},
	\quad
	\mathcal C^- := \{e\in\mathcal C:\omega_e<0\},
	\]
	and, for each node $m$ appearing in $\mathcal C$, define its weighted incidence sum $\sigma_m := \sum_{e\in\mathcal C:\, m\in \dot{e}} \omega_e.$ Let $S_0 := \{m:\sigma_m=0\}$ be the set of nodes whose fixed effects are eliminated by the weighted configuration, and let $S_R := \{m:\sigma_m\neq 0\}$ be the set of retained nodes. Also let
	\[
	\mathcal Z_{S_0}^{1:T}
	:=
	\bigl({Z_m^{1:T}}'\bigr)'_{m\in S_0},\quad \mathcal Z_{S_R}^{1:T} := \bigl({Z_m^{1:T}}'\bigr)'_{m\in S_R}.
	\]
	Assume throughout that both $\mathcal C^+$ and $\mathcal C^-$ are nonempty. Define
	\[
	Y_{\mathcal C}^+
	:=
	\prod_{e\in\mathcal C^+} D_e
	\prod_{e\in\mathcal C^-} (1-D_e),
	\quad
	Y_{\mathcal C}^-
	:=
	\prod_{e\in\mathcal C^+} (1-D_e)
	\prod_{e\in\mathcal C^-} D_e,
	\]
	and write
	\[
	\Delta_{\mathcal C,\omega}W(\theta)
	:=
	\sum_{e\in\mathcal C}\omega_e W_e(\theta),
	\quad
	\widetilde U_{\mathcal C,\omega}
	:=
	\sum_{e\in\mathcal C}\omega_e U_e
	-
	\sum_{m\in S_R}\sigma_m \nu_m.
	\]
	
	\begin{proposition}[Weighted node-differencing under additive fixed effects]
		\label{prop:additive-weighted}
		Under \eqref{eq:main-model} and Assumptions \ref{ass:eh}-\ref{ass:additive}, for every $c\in\R$ and every realization $z_R$ of $\mathcal Z_{S_R}^{1:T}$, there exists a CDF $F_{\mathcal C,\omega}(\cdot\mid z_R)$ such that
		\[
		\sup_{z_0\in\Supp(\mathcal Z_{S_0}^{1:T}\mid \mathcal Z_{S_R}^{1:T}=z_R)}
		\E\left[
		Y_{\mathcal C}^+
		\ind\{\Delta_{\mathcal C,\omega}W(\theta_0)\le c\}
		\mid
		\mathcal Z_{S_R}^{1:T}=z_R,\;
		\mathcal Z_{S_0}^{1:T}=z_0
		\right]
		\le
		F_{\mathcal C,\omega}(c\mid z_R),
		\]
		and
		\[
		\inf_{z_0\in\Supp(\mathcal Z_{S_0}^{1:T}\mid \mathcal Z_{S_R}^{1:T}=z_R)}
		\left[
		1-
		\E\left[
		Y_{\mathcal C}^-
		\ind\{\Delta_{\mathcal C,\omega}W(\theta_0)\ge c\}
		\mid
		\mathcal Z_{S_R}^{1:T}=z_R,\;
		\mathcal Z_{S_0}^{1:T}=z_0
		\right]
		\right]
		\ge
		F_{\mathcal C,\omega}(c\mid z_R).
		\]
	\end{proposition}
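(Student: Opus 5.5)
We will follow the same sandwich logic as Propositions \ref{prop:signed-subgraph} and \ref{prop:partial-spectrum}. The first step is pointwise. By \eqref{eq:main-model}, $D_e=1$ iff $U_e\le W_e(\theta_0)+\nu_i+\nu_j$, and $D_e=0$ iff the reverse strict inequality holds. On the event $Y_{\mathcal C}^+=1$, multiply each cell inequality by $\omega_e$, which keeps the direction for $e\in\mathcal C^+$ (where $D_e=1$) and flips it appropriately for $e\in\mathcal C^-$ (where $D_e=0$), and sum over $e\in\mathcal C$. This gives
\[
\sum_{e}\omega_e U_e\le \Delta_{\mathcal C,\omega}W(\theta_0)+\sum_{e}\omega_e(\nu_{i_e}+\nu_{j_e}).
\]
Regrouping the fixed effects by node gives $\sum_e\omega_e(\nu_{i_e}+\nu_j)=\sum_m\sigma_m\nu_m=\sum_{m\in S_R}\sigma_m\nu_m$, since $\sigma_m=0$ on $S_0$. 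Hence $Y^+_{\mathcal C}=1$ implies $\widetilde U_{\mathcal C,\omega}\le\Delta_{\mathcal C,\omega}W(\theta_0)$. Symmetrically, $Y^-_{\mathcal C}=1$ implies $\widetilde U_{\mathcal C,\omega}>\Delta_{\mathcal C,\omega}W(\theta_0)$; equality is harmless for weak bounds. It follows that
\[
Y^+_{\mathcal C}\ind\{\Delta W\le c\}\le\ind\{\widetilde U\le c\},
\qquad
Y^-_{\mathcal C}\ind\{\Delta W\ge c\}\le\ind\{\widetilde U> c\}.
\]

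The second step takes conditional expectations given $(\mathcal Z_{S_R}^{1:T},\mathcal Z_{S_0}^{1:T})=(z_R,z_0)$. The candidate CDF is $F_{\mathcal C,\omega}(c\mid z_R):=\Prob(\widetilde U_{\mathcal C,\omega}\le c\mid \mathcal Z_{S_R}^{1:T}=z_R)$. The key claim is that the conditional law of $\widetilde U_{\mathcal C,\omega}$ given both $z_R$ and $z_0$ does not depend on $z_0$. The argument runs as follows. $\widetilde U$ is a function of the shocks $(U_e)_{e\in\mathcal C}$ and of $(\nu_m)_{m\in S_R}$ only. By Assumption \ref{ass:additive}(iii) together with Assumption \ref{ass:eh}, the shock array is independent of the full node history. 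The node histories are i.i.d. across nodes by Assumption \ref{ass:additive}(ii), so the block $\{(\nu_m,Z_m^{1:T})\}_{m\in S_R}$ is independent of $\{Z_m^{1:T}\}_{m\in S_0}$, and $S_R\cap S_0=\varnothing$. Combining these gives $\bigl((U_e)_e,(\nu_m)_{m\in S_R},\mathcal Z_{S_R}^{1:T}\bigr)\perp \mathcal Z_{S_0}^{1:T}$. Therefore
\[
\Prob(\widetilde U\le c\mid z_R,z_0)=F_{\mathcal C,\omega}(c\mid z_R)
\]
for every $z_0$ in the conditional support. Applying this to the two indicator bounds yields the lower bound $\le F$ and the upper bound $\ge 1-\Prob(\widetilde U>c\mid z_R)=F$. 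Taking the supremum and infimum over $z_0$ then gives the stated inequalities.

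The main obstacle is the independence bookkeeping in step two, together with a subtlety it exposes. Nodes in $S_0$ may still have their $Z$ histories entering the covariates $W_e$, and the lagged network $X$ depends on the entire past network, including the $\nu$'s of all nodes. This is not a problem, because those objects appear only inside $\Delta W$, which is bounded pathwise in step one. We never need $\Delta W$ to be independent of anything; we only need the law of $\widetilde U$ to be invariant in $z_0$. We will also verify that the shocks entering $\widetilde U$ are the current-period $U_e$, whose joint law with $\nu_{S_R}$ is unaffected by conditioning on $\mathcal Z_{S_0}$, since the whole shock array, across all dates, is independent of all node histories. No temporal homogeneity is used, so the result holds under arbitrary serial correlation, consistent with Proposition \ref{prop:signed-subgraph}.
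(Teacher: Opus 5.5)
Your proposal is correct and follows essentially the same route as the paper: a pathwise weighted-sum bound that isolates $\widetilde U_{\mathcal C,\omega}$, followed by the independence of $\bigl((U_e)_{e\in\mathcal C},(\nu_m)_{m\in S_R},\mathcal Z_{S_R}^{1:T}\bigr)$ from $\mathcal Z_{S_0}^{1:T}$, so that $F_{\mathcal C,\omega}(c\mid z_R):=\Prob(\widetilde U_{\mathcal C,\omega}\le c\mid \mathcal Z_{S_R}^{1:T}=z_R)$ serves for every $z_0$. The only difference is that you use a weak inequality on $\{Y_{\mathcal C}^+=1\}$ and a strict one on $\{Y_{\mathcal C}^-=1\}$, whereas the paper uses strict on both, and this is fine. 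Note, however, that strictness on the $Y_{\mathcal C}^-$ side is genuinely needed rather than ``harmless'': with only $\widetilde U_{\mathcal C,\omega}\ge\Delta_{\mathcal C,\omega}W(\theta_0)$ you would get the smaller upper bound $\Prob(\widetilde U_{\mathcal C,\omega}<c\mid z_R)$ when there is an atom at $c$. That strictness comes from the strict inequalities $U_e>W_e(\theta_0)+\nu_i+\nu_j$ on the cells $e=(i,j,t)$ of the nonempty set $\mathcal C^+$.
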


	\begin{remark}
		\label{rem:additive-weighted}
		Proposition \ref{prop:additive-weighted} is the lagged-dynamic analog of the triad, weighted-star, and general cycle arguments in \citep*{GaoLiXu26}. It sharpens the unknown-CDF analysis even before specifying any parametric CDF. The key gain is combinatorial. First, complete elimination now requires only that weighted node incidences sum to zero, which is weaker than dyad balancing. Second, partial elimination is also admissible, because one conditions on the retained-node histories $\mathcal Z_{S_R}^{1:T}$ and profiles over the eliminated-node histories $\mathcal Z_{S_0}^{1:T}$, leaving the residual node effects inside the latent CDF $F_{\mathcal C,\omega}(\cdot\mid z_R)$. Additionally, dynamic versions of triads, weighted stars, tetrads, and longer cycles can all be used, and they can all contribute valid semiparametric restrictions.
		In particular, if one defines $\Theta_I^{\mathrm{add}}$ as the set of $\theta$ satisfying the envelope implications from Proposition \ref{prop:additive-weighted} for all admissible weighted configurations $(\mathcal C,\omega)$, all thresholds $c$, and all retained conditioning values $z_R$, then
		\[
		\theta_0 \in \Theta_I^{\mathrm{dyad}} \cap \Theta_I^{\mathrm{add}}.
		\]
		Thus additive node effects sharpen the semiparametric analysis even when the marginal CDF of $U_{ijt}$ is left unknown, so this sharpening is complementary to Proposition \ref{prop:known-composite}. It is worth noting precisely which components of Assumption~\ref{ass:additive} drive the result. The additive representation $A_{ij}=\nu_i+\nu_j$ enables the combinatorial gain of organizing weighted differencing around nodes rather than dyads. The i.i.d.\ node-history condition (Assumption~\ref{ass:additive}(ii)) and the stronger shock independence (Assumption~\ref{ass:additive}(iv)) are used in the proof to ensure that the retained node effects are conditionally independent of the eliminated-node histories given the retained histories, which is what allows profiling over $\mathcal Z_{S_0}^{1:T}$ while holding the latent CDF fixed. The proof does not use homogeneous marginals across dates, so this sharpening remains valid under arbitrary serial correlation within dyads. If, in addition, the assumptions of Proposition \ref{prop:known-composite} hold and the weighted configuration achieves complete node balance $\sigma_m=0$ for every node in $\mathcal C$, then $S_R$ is empty, $\widetilde U_{\mathcal C,\omega}=\sum_{e\in\mathcal C}\omega_e U_e$, and the now-unconditional CDF $F_{\mathcal C,\omega}$ is the known convolution of the scaled shock marginals.
	\end{remark}
	
	\subsection{Additive Node Fixed Effects with IID Logit Specification}
	
	The previous two subsections sharpened identification in two complementary directions: Section~\ref{Section: Sharper Identification under Additional Structures}.1 used a known marginal CDF with serial independence to make composite-error distributions explicit, while Section~\ref{Section: Sharper Identification under Additional Structures}.2 used additive node effects to enlarge the class of admissible differencing designs. This subsection combines the two strengthenings under a logit specification and shows that the combination yields an exact conditional logit representation that goes well beyond the per-period analogue of \citet{graham_2017}'s static tetrad logit. The key gain is that cross-node differencing (from Section~\ref{Section: Sharper Identification under Additional Structures}.2) and cross-period differencing (from Section~\ref{Section: Semiparametric Identification}) can be combined freely: any configuration of edge-time cells that achieves complete node balance produces an exact conditional logit, whether or not the cells share a common date. This yields a much larger class of identifying restrictions and a correspondingly weaker sufficient condition for point identification.
	
	We begin by motivating why logit is special. Suppose additive node effects are combined with a known conditional CDF $F$ for the current shock. At each date $t$, the model is
	\[
	D_{ijt}
	=
	\ind\left\{
	Z_{ijt}'\alpha_0 + X_{ij,t-1}'\lambda_0 + \nu_i + \nu_j - U_{ijt} \ge 0
	\right\}.
	\]
	Unlike the static strategic model studied in \citep*{GaoLiXu26}, there is no contemporaneous endogenous network statistic here, so the isolation machinery from that paper is not needed. If, conditional on the node effects and the lagged observables, the current shock on edge $(i,j)$ at date $t$ has CDF $F$, then
	\[
	p_{ij,t}
	:=
	\Prob\!\left(
	D_{ijt}=1
	\mid
	Z_{ijt},X_{ij,t-1},\nu
	\right)
	=
	F\!\left(Z_{ijt}'\alpha_0 + X_{ij,t-1}'\lambda_0 + \nu_i + \nu_j\right).
	\]
	For any configuration $\mathcal C=(\mathcal C^+,\mathcal C^-)$ of edge-time cells, the ratio $\Prob(Y_{\mathcal C}^+=1\mid \mathcal Z_{\mathcal C},\nu)/\Prob(Y_{\mathcal C}^-=1\mid \mathcal Z_{\mathcal C},\nu)$ involves terms of the form $F(\eta_e)/(1-F(\eta_e))$. The additive node effects cancel from the exponent $\sum_{e\in\mathcal C^+}\eta_e - \sum_{e\in\mathcal C^-}\eta_e$ if and only if $\sigma_m=0$ for every node $m$. But the multiplicative product of odds ratios reduces to an exponential of this sum if and only if $\log[F(\cdot)/(1-F(\cdot))]$ is affine---that is, up to location-scale normalization, exactly the logit case. For nonlogit $F$ (such as normal/probit), $\log[F(\cdot)/(1-F(\cdot))]$ is nonlinear and the node effects do not cancel algebraically from the product, so there is no exact conditional likelihood of the \citet{graham_2017} type.
	
	The semiparametric results above allow arbitrary serial correlation. The logit result below is sharper, but it does require a fully i.i.d.\ logistic shock structure.
	
	\begin{assumption}[IID logistic shocks with additive node effects]
		\label{ass:lt} Assume that (i) $A_{ij}=\nu_i+\nu_j$, and (ii) the shocks $\{U_{ijt}: i<j,\ t=1,\ldots,T\}$ are i.i.d.\ across dyads and dates with standard logistic CDF, and are jointly independent of the full latent-heterogeneity array and the full exogenous covariate array.
	\end{assumption}
	
	\noindent The standard logistic specification in Assumption~\ref{ass:lt}(ii) fixes both the location and scale of the error distribution, thereby resolving the scale indeterminacy present in the semiparametric analysis.
	
	Let $\mathcal C=(\mathcal C^+,\mathcal C^-)$ be a configuration of edge-time cells $e=(i,j,t)$, and recall the notation $\sigma_m=\#\{e\in\mathcal C^+: m\in\dot{e}\}-\#\{e\in\mathcal C^-: m\in\dot{e}\}$ for the signed incidence of node $m$. Say $\mathcal C$ is \emph{completely node-balanced} if $\sigma_m=0$ for every node $m$ appearing in $\mathcal C$. Define
	\[
	\Delta_{\mathcal C}W(\theta)
	:=
	\sum_{e\in\mathcal C^+}W_e(\theta)
	-
	\sum_{e\in\mathcal C^-}W_e(\theta),
	\]
	and let $\mathcal Z_{\mathcal C}$ denote the collection of observed exogenous histories for all edges appearing in $\mathcal C$.
	
	\begin{theorem}[Conditional logit under node-balanced comparisons]
		\label{thm:tetrad-logit}
		Under Assumption \ref{ass:lt}, let $\mathcal C=(\mathcal C^+,\mathcal C^-)$ be any completely node-balanced configuration. Then
		\[
		\log
		\frac{
			\Prob(Y_{\mathcal C}^+ = 1 \mid \mathcal Z_{\mathcal C})
		}{
			\Prob(Y_{\mathcal C}^- = 1 \mid \mathcal Z_{\mathcal C})
		}
		=
		\Delta_{\mathcal C}W(\theta_0).
		\]
		
		Equivalently,
		\[
		\Prob(Y_{\mathcal C}^+ =1 \mid Y_{\mathcal C}^+ + Y_{\mathcal C}^- = 1,\;\mathcal Z_{\mathcal C})
		=
		\frac{\exp(\Delta_{\mathcal C}W(\theta_0))}{1+\exp(\Delta_{\mathcal C}W(\theta_0))}.
		\]
		If the support of $\{\Delta_{\mathcal C}W(\theta_0):\mathcal C\ \text{completely node-balanced}\}$ spans $\R^{d_h+d_x}$, then $\theta_0=(\alpha_0',\lambda_0')'$ is point identified.
	\end{theorem}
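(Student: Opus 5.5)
The plan is to prove the log-odds identity first conditional on the node effects $\nu$ and the relevant lagged network information, and then to integrate $\nu$ out. The point is that the conditional ratio will turn out not to depend on $\nu$ at all.

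\emph{Step 1 (one cell).} Fix $e=(i,j,t)$ and write $\eta_e:=W_e(\theta_0)+\nu_i+\nu_j$. Let $\mathcal H_{t-1}$ denote the network history through date $t-1$. By Assumption~\ref{ass:lt}(ii), $U_{ijt}$ is standard logistic and independent of $(\nu,\mathcal Z,\mathcal H_{t-1})$, since the lagged network is a function of earlier shocks, $\nu$, $\mathcal Z$ and the initial network. Hence
\[
\Prob(D_e=1\mid\nu,\mathcal Z,\mathcal H_{t-1})=\frac{\exp(\eta_e)}{1+\exp(\eta_e)},
\qquad
\frac{\Prob(D_e=1\mid\cdot)}{\Prob(D_e=0\mid\cdot)}=\exp(\eta_e).
\]

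\emph{Step 2 (many cells).} Order the cells of $\mathcal C$ by date. Cells sharing a date are conditionally independent given $(\nu,\mathcal Z,\mathcal H_{t-1})$, because the shocks are i.i.d.\ across dyads. Cells at later dates are handled by sequential conditioning (the chain rule).

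Taking the ratio of $\Prob(Y^+_{\mathcal C}=1\mid\cdot)$ to $\Prob(Y^-_{\mathcal C}=1\mid\cdot)$, every factor $1+\exp(\eta_e)$ appears in both numerator and denominator and cancels. What remains is
\[
\exp\Bigl(\sum_{e\in\mathcal C^+}\eta_e-\sum_{e\in\mathcal C^-}\eta_e\Bigr)
=\exp\Bigl(\Delta_{\mathcal C}W(\theta_0)+\sum_m\sigma_m\nu_m\Bigr)
=\exp\bigl(\Delta_{\mathcal C}W(\theta_0)\bigr),
\]
where the last equality uses complete node balance, $\sigma_m=0$ for every $m$.

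\emph{Main obstacle.} In an intertemporal configuration, the lagged statistic $X_{e}$ of a later cell can depend on the outcome of an earlier cell in $\mathcal C$. Since $Y^+$ and $Y^-$ fix opposite values of that earlier outcome, $W_e(\theta_0)$ need not take the same value on the two events, and the cancellation above could fail.

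I would handle this by reading the conditioning set in the theorem as $\mathcal Z_{\mathcal C}$ augmented by the realized lagged statistics $(X_{e})_{e\in\mathcal C}$ and the non-$\mathcal C$ links on which they depend. Equivalently, I would restrict attention to configurations in which no cell of $\mathcal C$ enters the lagged statistics of another cell of $\mathcal C$. Every within-date configuration qualifies, and so do intertemporal ones whose earlier cells are not adjacent to the later cells' dyads. Under this reading $\Delta_{\mathcal C}W(\theta_0)$ is measurable with respect to the conditioning information and is common to both events, so the telescoping goes through. I would state this qualification explicitly.

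\emph{Step 3 (removing $\nu$).} Step 2 gives $\Prob(Y^+_{\mathcal C}=1\mid\nu,\cdot)=e^{\Delta_{\mathcal C}W(\theta_0)}\,\Prob(Y^-_{\mathcal C}=1\mid\nu,\cdot)$ pointwise in $\nu$, with a factor that does not involve $\nu$. Integrating over the conditional law of $\nu$ therefore preserves the identity. This yields the log-odds statement. The conditional-logit form follows because $Y^+_{\mathcal C}$ and $Y^-_{\mathcal C}$ are disjoint events, so $\Prob(Y^+=1\mid Y^++Y^-=1)=r/(1+r)$ with $r=e^{\Delta_{\mathcal C}W(\theta_0)}$.

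\emph{Step 4 (point identification).} Since $\Delta_{\mathcal C}W(\theta)=\Delta_{\mathcal C}Q'\theta$ is linear in $\theta$, with $\Delta_{\mathcal C}Q$ the signed sum of the regressor vectors $(Z_e',X_{e}')'$, I would argue in either of two ways.

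The first is directly: the left side, the observable log-odds, is identified. Hence $\Delta_{\mathcal C}Q'\theta_0$ is known for every $\Delta_{\mathcal C}Q$ in the support. If $\theta$ is observationally equivalent to $\theta_0$, then $\Delta_{\mathcal C}Q'(\theta-\theta_0)=0$ on a set spanning $\R^{d_h+d_x}$, which forces $\theta=\theta_0$.

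The second uses the conditional logit likelihood. The population criterion, summed over completely node-balanced configurations, is strictly concave under the spanning condition and has its unique maximizer at $\theta_0$.
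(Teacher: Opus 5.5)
Your skeleton is exactly the paper's proof: condition on $\nu$, multiply cell-level logit probabilities, let the odds telescope, remove $\sum_m\sigma_m\nu_m$ by node balance, integrate $\nu$ out, and finish with linearity of the index plus spanning. You are also right about the obstacle. The paper simply asserts that given $(\nu,\mathcal Z_{\mathcal C})$ the $D_e$ are independent Bernoulli$(\Lambda(\eta_e))$, which ignores that $\eta_e$ contains a random lagged network statistic. But your repair does not close the gap, and your two readings are not equivalent.

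\textbf{Where the repair fails.} You claim that measurability of $\Delta_{\mathcal C}W(\theta_0)$ with respect to the conditioning information lets the telescoping go through. That is not enough. Your augmented reading conditions on links dated after some cell $e\in\mathcal C$ whenever a later cell's lagged statistic is built from intermediate dates. Once you do that, Step 1 breaks for $e$: the joint probability picks up the likelihood of those intermediate links, which depends on $D_e$ and on $\nu$ and does not cancel between $Y^+_{\mathcal C}$ and $Y^-_{\mathcal C}$. Without that conditioning, requiring that no cell enter another cell's lagged statistic does not block indirect propagation. Later $W_e$ are then random, with laws that differ across the two events.

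\textbf{Counterexample.} Take $X_{ij,t-1}=D_{ij,t-1}$, $\alpha_0=0$, $\mathcal C^+=\{(i,j,1)\}$, $\mathcal C^-=\{(i,j,3)\}$. This is node-balanced, and $X_{ij,2}=D_{ij2}$ is not a cell of $\mathcal C$. Yet for $d\in\{0,1\}$
\[
\frac{\Prob(D_{ij1}=1,\,D_{ij2}=d,\,D_{ij3}=0\mid D_{ij0},A_{ij})}{\Prob(D_{ij1}=0,\,D_{ij2}=d,\,D_{ij3}=1\mid D_{ij0},A_{ij})}
=e^{\lambda_0D_{ij0}}\,\frac{1+e^{A_{ij}}}{1+e^{\lambda_0+A_{ij}}},
\]
whereas $\exp(\Delta_{\mathcal C}W(\theta_0))=\exp(\lambda_0(D_{ij0}-d))$. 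The discrepancy is exactly the factor $\Prob(D_{ij2}=d\mid D_{ij1}=1,A_{ij})/\Prob(D_{ij2}=d\mid D_{ij1}=0,A_{ij})$. When $\lambda_0\neq 0$, the right side lies strictly between $e^{\lambda_0(D_{ij0}-1)}$ and $e^{\lambda_0D_{ij0}}$ for every $A_{ij}$. So no integration over $A_{ij}=\nu_i+\nu_j$ can restore the identity. This is the familiar dynamic-logit obstruction that forces Chamberlain/Honor\'e--Kyriazidou-type conditioning on specific outcomes. A generic measurability argument cannot deliver intertemporal configurations, and the paper's proof, which never confronts the issue, does not deliver them either.

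\textbf{What your argument does prove.} It proves the within-date case. If all cells share date $t$, condition on $(\nu,\mathcal Z,\mathcal H_{t-1})$. The $D_e$ are then conditionally independent Bernoulli$(\Lambda(\eta_e))$ with $\eta_e$ measurable, so the odds telescope. Integrating out $\nu$ and then coarsening to $(\mathcal Z_{\mathcal C},(X_e)_{e\in\mathcal C})$ preserves the identity, because $\exp(\Delta_{\mathcal C}W(\theta_0))$ is measurable with respect to the coarser information.

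The argument also covers cells on two consecutive dates $\tau,\tau+1$, provided no date-$\tau$ cell of $\mathcal C$ enters a date-$(\tau+1)$ lagged statistic. Condition on $\mathcal H_{\tau-1}$ and on the date-$\tau$ links outside $\mathcal C$; given $(\nu,\mathcal Z,\mathcal H_{\tau-1})$, those links are independent of the date-$\tau$ cells. Your Step 4 is then fine, with spanning imposed on the regressor contrasts $\Delta_{\mathcal C}Q$ within this restricted class.

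Finally, Step 1 also needs the shocks at dates $t\ge 1$ to be independent of the initial network. Assumption~\ref{ass:lt} does not state this.
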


	\begin{remark}[Tetrad logit as a special case]
		\label{rem:tetrad-special}
		The within-date tetrad is the simplest completely node-balanced configuration: for four distinct nodes $(i,j,h,k)$ and a single date $t$, set $\mathcal C^+=\{(i,j,t),(h,k,t)\}$ and $\mathcal C^-=\{(i,k,t),(j,h,t)\}$. Each node appears once in $\mathcal C^+$ and once in $\mathcal C^-$, so $\sigma_m=0$ for $m\in\{i,j,h,k\}$. In this case Theorem~\ref{thm:tetrad-logit} reduces to a per-period application of \citet{graham_2017}'s static tetrad logit with lagged regressors entering the index. That per-period result is not new: it follows directly from \citet{graham_2017} once the lagged network covariates are treated as predetermined. The contribution of Theorem~\ref{thm:tetrad-logit} is that it generalizes the tetrad logit by exploiting both the cross-node differencing of Section~\ref{Section: Sharper Identification under Additional Structures}.2 and the cross-period differencing of Section~\ref{Section: Semiparametric Identification}, thereby producing a substantially richer class of exact conditional logit restrictions.
	\end{remark}

	\begin{remark}[Examples of new configurations]
		\label{rem:new-configs}
		The following completely node-balanced configurations go beyond the within-date tetrad and are specific to the dynamic setting of this paper.
		
		\emph{Intertemporal tetrads.} Take four distinct nodes $(i,j,h,k)$ and (possibly distinct) dates $t_1,t_2,t_3,t_4$: set $\mathcal C^+=\{(i,j,t_1),(h,k,t_2)\}$ and $\mathcal C^-=\{(i,k,t_3),(j,h,t_4)\}$. Each node still appears once in $\mathcal C^+$ and once in $\mathcal C^-$. The covariate contrast $\Delta_{\mathcal C}W(\theta_0)$ now mixes cross-sectional and temporal variation, providing directions in $\R^{d_h+d_x}$ not available from any single-date tetrad.
		
		\emph{Triadic cycles.} Take three distinct nodes $\{i,j,k\}$ and six dates: set $\mathcal C^+=\{(i,j,t_1),(j,k,t_2),(i,k,t_3)\}$ and $\mathcal C^-=\{(i,j,t_4),(j,k,t_5),(i,k,t_6)\}$. Each node appears in exactly two edges of $\mathcal C^+$ and two edges of $\mathcal C^-$, so $\sigma_m=0$ for $m\in\{i,j,k\}$. This uses only three nodes rather than four, so triadic comparisons are available even in smaller networks.
		
		\emph{Longer cycles and weighted stars.} More generally, any node-balanced cycle of length $2k$ or any star configuration in which the hub's positive and negative incidences cancel produces an exact conditional logit. The class of such configurations grows combinatorially with the number of nodes and dates.
	\end{remark}
	
	\begin{remark}[Point identification: weakened support condition]
		\label{rem:support}
		The support condition in Theorem~\ref{thm:tetrad-logit} is stated over all completely node-balanced configurations, not just within-date tetrads. This is strictly weaker than requiring the within-date tetrad-differenced covariate vector to span $\R^{d_h+d_x}$, because intertemporal tetrads and triadic cycles contribute additional directions. The gain is substantive in at least two settings. First, when $n$ is small (so few tetrads exist at any given date), triadic cycles on three nodes expand the set of available comparisons. Second, when $X_{ij,t-1}$ contains count-valued network statistics such as common friends, its within-date tetrad difference takes integer values with limited variation, but intertemporal configurations pool across dates and can restore full rank.
		
		As a concrete illustration, consider $d_h=1$ and $X_{ij,t-1}=(D_{ij,t-1},R_{ij,t-1})'$ with $d_x=2$. Within-date tetrads generate covariate contrasts in $\R\times\mathbb{Z}^2$. The continuous covariate $Z_{it}$ provides variation along the first coordinate (though with possible point masses from the tetrad combination), and the integer-valued lagged-network differences provide the remaining two dimensions whenever the network is sufficiently heterogeneous. If the within-date tetrad support alone does not span $\R^3$, one can supplement it with intertemporal tetrads: at distinct dates $t_1\neq t_3$ or $t_2\neq t_4$, the lagged-network differences $\Delta X$ draw from different network configurations, and the resulting covariate contrasts can fill out missing directions.
	\end{remark}

	\begin{remark}[Moment inequality restrictions]
		\label{rem:logit-combined}
		Beyond the exact conditional logit of Theorem~\ref{thm:tetrad-logit}, the semiparametric results of Sections~\ref{Section: Semiparametric Identification}--\ref{Section: Sharper Identification under Additional Structures} also continue to apply under Assumption~\ref{ass:lt}. In particular, Assumption~\ref{ass:lt} implies both a known marginal CDF (standard logistic) and serial independence, so Proposition~\ref{prop:known-composite} gives explicit sandwich bounds for every dyad-balanced signed subgraph with the composite-error CDF computed as a known convolution of logistic distributions. Simultaneously, Proposition~\ref{prop:additive-weighted} applies, and for any completely node-balanced configuration the composite error has a known CDF. These moment inequality restrictions supplement the conditional logit of Theorem~\ref{thm:tetrad-logit} in two ways: they provide overidentifying restrictions useful for specification testing, and they supply additional identifying power through the ``bounding-by-$c$'' technique when the conditional logit support condition for point identification fails.
	\end{remark}

	\section{Conclusion}\label{Section: Conclusion}
	
	This paper studies a broad class of dynamic dyadic network formation models with time-varying observed covariates, lagged local network statistics, and unobserved heterogeneity. The framework nests observed-covariate homophily, transitivity, second-order or indirect-friend effects, and more general local subgraph statistics within a single dynamic index model. The main message is that, once these network covariates are lagged and observable, the model can be studied through a unified difference-out / integrate-out perspective rather than only through exact logit likelihood methods. Three principal strengthenings then sharpen that semiparametric analysis: a known marginal CDF combined with serial independence, additive node effects, and the special affine-log-odds structure of logit. Combining all three under i.i.d.\ logit with additive node effects yields an exact conditional logit representation for any completely node-balanced configuration of edge-time cells, generalizing the per-period analogue of \citet{graham_2017}'s tetrad logit by exploiting both cross-node and cross-period variation. Sharpness, inference, and further econometric development are left to future work.
	
	\bibliographystyle{ecta}
	\bibliography{references}
	
	\appendix
	
	\section{Proofs}
	
	\begin{proof}[Proof of Proposition \ref{prop:dyad}]
		
		Fix $h\in\Supp(Z_{ij}^{1:T})$ and $c\in\R$. For each date $t$, define
		\[
		F_h(c)
		:=
		\Prob(V_{ijt}\le c \mid Z_{ij}^{1:T}=h).
		\]
		This object does not depend on $t$. Indeed, using $V_{ijt}=U_{ijt}-A_{ij}$, the joint independence part of Assumption \ref{ass:eh}, and the law of iterated expectations,
		\[
		\begin{aligned}
			\Prob(V_{ijt}\le c \mid Z_{ij}^{1:T}=h)
			&=
			\E\!\left[
			\Prob(U_{ijt}\le c+A_{ij}\mid A_{ij},Z_{ij}^{1:T})
			\middle|\,
			Z_{ij}^{1:T}=h
			\right]
			\\
			&=
			\E\!\left[
			F_{U_t}(c+A_{ij})
			\middle|\,
			Z_{ij}^{1:T}=h
			\right],
		\end{aligned}
		\]
		where $F_{U_t}$ is the marginal CDF of $U_{ijt}$. By the homogeneous-marginal part of Assumption \ref{ass:eh}, $F_{U_t}$ is the same for every $t$, so the right-hand side is common across dates.
		
		Next, if $D_{ijt}=1$ and $W_{ijt}(\theta_0)\le c$, then by \eqref{eq:main-model},
		\[
		V_{ijt}\le W_{ijt}(\theta_0)\le c.
		\]
		Hence
		\[
		\ind\{D_{ijt}=1,\;W_{ijt}(\theta_0)\le c\}
		\le
		\ind\{V_{ijt}\le c\}.
		\]
		Taking expectations conditional on $Z_{ij}^{1:T}=h$ gives
		\[
		L_t(c\mid h;\theta_0)\le F_h(c).
		\]
		
		Similarly, if $D_{ijs}=0$ and $W_{ijs}(\theta_0)\ge c$, then
		\[
		V_{ijs}>W_{ijs}(\theta_0)\ge c,
		\]
		so
		\[
		\ind\{D_{ijs}=0,\;W_{ijs}(\theta_0)\ge c\}
		\le
		\ind\{V_{ijs}>c\}.
		\]
		Taking expectations conditional on $Z_{ij}^{1:T}=h$ yields
		\[
		\Prob\!\left(
		D_{ijs}=0,\;W_{ijs}(\theta_0)\ge c
		\mid
		Z_{ij}^{1:T}=h
		\right)
		\le
		1-F_h(c),
		\]
		and therefore
		\[
		F_h(c)\le U_s(c\mid h;\theta_0).
		\]
		Since this holds for every pair of dates $(t,s)$,
		\[
		L_t(c\mid h;\theta_0)\le F_h(c)\le U_s(c\mid h;\theta_0)
		\quad
		\text{for all } t,s.
		\]
		Taking the maximum over $t$ and the minimum over $s$ gives
		\[
		\overline L(c\mid h;\theta_0)\le \underline U(c\mid h;\theta_0).
		\]
		Because $c$ and $h$ were arbitrary, $\theta_0\in\Theta_I^{\mathrm{dyad}}$.
	\end{proof}
	
	\begin{proof}[Proof of Proposition \ref{prop:dyad-transition}]
		
		Fix $z\in\Supp(\mathcal Z_{ij}^{1:T})$ and $c\in\R$. If $D_{ijt}(1-D_{ijs})=1$, then by \eqref{eq:main-model},
		\[
		U_{ijt}\le W_{ijt}(\theta_0)+A_{ij},
		\quad
		U_{ijs}>W_{ijs}(\theta_0)+A_{ij}.
		\]
		Subtracting the second inequality from the first gives
		\[
		\Delta_{ts}U_{ij}
		<
		\Delta_{ts}W_{ij}(\theta_0).
		\]
		Therefore
		\[
		D_{ijt}(1-D_{ijs})
		\ind\{\Delta_{ts}W_{ij}(\theta_0)\le c\}
		\le
		\ind\{\Delta_{ts}U_{ij}<c\}.
		\]
		Taking expectations conditional on $\mathcal Z_{ij}^{1:T}=z$ and using the exogeneity part of Assumption \ref{ass:eh},
		\[
		\E\left[
		D_{ijt}(1-D_{ijs})
		\ind\{\Delta_{ts}W_{ij}(\theta_0)\le c\}
		\mid
		\mathcal Z_{ij}^{1:T}=z
		\right]
		\le
		\Prob(\Delta_{ts}U_{ij}<c).
		\]
		
		Likewise, if $(1-D_{ijt})D_{ijs}=1$, then
		\[
		U_{ijt}>W_{ijt}(\theta_0)+A_{ij},
		\quad
		U_{ijs}\le W_{ijs}(\theta_0)+A_{ij},
		\]
		so
		\[
		\Delta_{ts}U_{ij}
		>
		\Delta_{ts}W_{ij}(\theta_0).
		\]
		Hence
		\[
		(1-D_{ijt})D_{ijs}
		\ind\{\Delta_{ts}W_{ij}(\theta_0)\ge c\}
		\le
		\ind\{\Delta_{ts}U_{ij}>c\},
		\]
		and therefore
		\[
		\E\left[
		(1-D_{ijt})D_{ijs}
		\ind\{\Delta_{ts}W_{ij}(\theta_0)\ge c\}
		\mid
		\mathcal Z_{ij}^{1:T}=z
		\right]
		\le
		\Prob(\Delta_{ts}U_{ij}>c).
		\]
		
		Combining the two inequalities gives, for every $z$,
		\[
		\E\left[
		D_{ijt}(1-D_{ijs})
		\ind\{\Delta_{ts}W_{ij}(\theta_0)\le c\}
		\mid
		\mathcal Z_{ij}^{1:T}=z
		\right]
		\le
		\Prob(\Delta_{ts}U_{ij}<c)
		\le
		\Prob(\Delta_{ts}U_{ij}\le c),
		\]
		and
		\[
		\Prob(\Delta_{ts}U_{ij}\le c)
		=
		1-\Prob(\Delta_{ts}U_{ij}>c)
		\le
		1-
		\E\left[
		(1-D_{ijt})D_{ijs}
		\ind\{\Delta_{ts}W_{ij}(\theta_0)\ge c\}
		\mid
		\mathcal Z_{ij}^{1:T}=z
		\right].
		\]
		Taking the supremum over the first display and the infimum over the second yields the stated envelope inequality.
	\end{proof}
	
	\begin{proof}[Proof of Proposition \ref{prop:signed-subgraph}]
		
		Fix $c\in\R$. On the event $Y_{\mathcal C}^+ = 1$, each cell $(i,j,t)\in\mathcal C^+$ satisfies
		\[
		U_{ijt}\le W_{ijt}(\theta_0)+A_{ij},
		\]
		while each cell $(i,j,t)\in\mathcal C^-$ satisfies
		\[
		U_{ijt}>W_{ijt}(\theta_0)+A_{ij}.
		\]
		Subtracting the second collection from the first yields
		\[
		\Delta_{\mathcal C}U
		<
		\Delta_{\mathcal C}W(\theta_0)
		+
		\sum_{(i,j,t)\in\mathcal C^+}A_{ij}
		-
		\sum_{(i,j,t)\in\mathcal C^-}A_{ij}.
		\]
		By the balance condition, for each dyad $(i,j)$ the coefficient on $A_{ij}$ in the final two sums is
		\[
		\#\{t:(i,j,t)\in\mathcal C^+\}
		-
		\#\{t:(i,j,t)\in\mathcal C^-\}
		=
		0,
		\]
		so the dyad-effect term vanishes. Hence, on $Y_{\mathcal C}^+  = 1$,
		\[
		\Delta_{\mathcal C}U<\Delta_{\mathcal C}W(\theta_0).
		\]
		Therefore
		\[
		Y_{\mathcal C}^+ \ind\{\Delta_{\mathcal C}W(\theta_0)\le c\}
		\le
		\ind\{\Delta_{\mathcal C}U<c\}.
		\]
		
		Similarly, on the flipped event $Y_{\mathcal C}^- = 1$ one has
		\[
		\Delta_{\mathcal C}U>\Delta_{\mathcal C}W(\theta_0),
		\]
		and therefore
		\[
		Y_{\mathcal C}^- \ind\{\Delta_{\mathcal C}W(\theta_0)\ge c\}
		\le
		\ind\{\Delta_{\mathcal C}U>c\}.
		\]
		
		Now condition on $\mathcal Z_{\mathcal C}^{1:T}=z$. Because $\Delta_{\mathcal C}U$ is a measurable function of finitely many shocks and Assumption \ref{ass:eh} makes the shock process independent of the full exogenous covariate process, the conditional law of $\Delta_{\mathcal C}U$ does not depend on $z$. Hence, for every such $z$,
		\[
		\E\left[
		Y_{\mathcal C}^+ \ind\{\Delta_{\mathcal C}W(\theta_0)\le c\}
		\mid
		\mathcal Z_{\mathcal C}^{1:T}=z
		\right]
		\le
		\Prob(\Delta_{\mathcal C}U<c),
		\]
		and
		\[
		\E\left[
		Y_{\mathcal C}^- \ind\{\Delta_{\mathcal C}W(\theta_0)\ge c\}
		\mid
		\mathcal Z_{\mathcal C}^{1:T}=z
		\right]
		\le
		\Prob(\Delta_{\mathcal C}U>c).
		\]
		Consequently,
		\[
		\sup_z
		\E\left[
		Y_{\mathcal C}^+ \ind\{\Delta_{\mathcal C}W(\theta_0)\le c\}
		\mid
		\mathcal Z_{\mathcal C}^{1:T}=z
		\right]
		\le
		\Prob(\Delta_{\mathcal C}U<c)
		\le
		\Prob(\Delta_{\mathcal C}U\le c),
		\]
		while
		\[
		\Prob(\Delta_{\mathcal C}U\le c)
		=
		1-\Prob(\Delta_{\mathcal C}U>c)
		\le
		\inf_z
		\left[
		1-
		\E\left[
		Y_{\mathcal C}^-\ind\{\Delta_{\mathcal C}W(\theta_0)\ge c\}
		\mid
		\mathcal Z_{\mathcal C}^{1:T}=z
		\right]
		\right].
		\]
		This proves the claim.
	\end{proof}
	
	\begin{proof}[Proof of Proposition \ref{prop:partial-spectrum}]
		
		Fix $s\in\Supp(S_{\mathcal R})$ and $c\in\R$. Because both $\mathcal C_g^+$ and $\mathcal C_g^-$ are nonempty, for every $g\in\mathcal G$ the event $Y_g^+=1$ implies the strict inequality
		\[
		M_g<\Delta_gW(\theta_0),
		\]
		while $Y_g^-=1$ implies
		\[
		M_g>\Delta_gW(\theta_0).
		\]
		Consequently,
		\[
		Y_g^+\ind\{\Delta_gW(\theta_0)\le c\}
		\le
		\ind\{M_g<c\},
		\]
		and
		\[
		Y_g^-\ind\{\Delta_gW(\theta_0)\ge c\}
		\le
		\ind\{M_g>c\}.
		\]
		
		Let
		\[
		F(c\mid s)
		:=
		\Prob(M_g\le c\mid S_{\mathcal R}=s),
		\]
		where the right-hand side is well-defined and does not depend on the choice of $g\in\mathcal G$ by assumption iv. Because it is the conditional CDF of $M_g$ given $S_{\mathcal R}=s$, $F(\cdot\mid s)$ is a CDF.
		
		Now fix any $g\in\mathcal G$ and any $t\in\Supp(T_g\mid S_{\mathcal R}=s)$. Taking expectations conditional on $(S_{\mathcal R},T_g)=(s,t)$ gives
		\[
		\E\left[
		Y_g^+\ind\{\Delta_gW(\theta_0)\le c\}
		\mid
		S_{\mathcal R}=s,\;T_g=t
		\right]
		\le
		\Prob(M_g<c\mid S_{\mathcal R}=s,\;T_g=t).
		\]
		By assumption iv, the conditional law of $M_g$ given $S_{\mathcal R}=s$ does not depend on $T_g$, so
		\[
		\Prob(M_g<c\mid S_{\mathcal R}=s,\;T_g=t)
		=
		\Prob(M_g<c\mid S_{\mathcal R}=s)
		\le
		F(c\mid s).
		\]
		This proves the first inequality after taking the supremum over $g$ and $t$.
		
		Likewise,
		\[
		\E\left[
		Y_g^-\ind\{\Delta_gW(\theta_0)\ge c\}
		\mid
		S_{\mathcal R}=s,\;T_g=t
		\right]
		\le
		\Prob(M_g>c\mid S_{\mathcal R}=s,\;T_g=t)
		=
		1-F(c\mid s),
		\]
		again by assumption iv. Rearranging yields
		\[
		F(c\mid s)
		\le
		1-
		\E\left[
		Y_g^-\ind\{\Delta_gW(\theta_0)\ge c\}
		\mid
		S_{\mathcal R}=s,\;T_g=t
		\right].
		\]
		Taking the infimum over $g$ and $t$ proves the second inequality.
	\end{proof}
	
	\begin{proof}[Proof of Proposition \ref{prop:known-composite}]
		
		For part 1, Assumption \ref{ass:eh} and serial independence imply that $U_{ijt}$ and $U_{ijs}$ are independent and both have CDF $F_U$. Therefore, for every $c\in\R$,
		\[
		\begin{aligned}
			\Prob(\Delta_{ts}U_{ij}\le c)
			&=
			\Prob(U_{ijt}-U_{ijs}\le c)
			\\
			&=
			\int
			\Prob(U_{ijt}\le c+u\mid U_{ijs}=u)\,
			dF_U(u)
			\\
			&=
			\int F_U(c+u)\,dF_U(u),
		\end{aligned}
		\]
		which proves the expression for $F_{\Delta}$. Because $F_U$ is continuous, the difference $\Delta_{ts}U_{ij}$ has a continuous CDF, so
		\[
		\Prob(\Delta_{ts}U_{ij}<c)=F_{\Delta}(c),
		\quad
		\Prob(\Delta_{ts}U_{ij}>c)=1-F_{\Delta}(c).
		\]
		Applying Proposition \ref{prop:dyad-transition} yields the displayed dyad-transition bounds.
		
		For part 2, Assumption \ref{ass:eh} gives independence across dyads, while the added serial-independence assumption gives independence across dates within each dyad. Hence the shocks attached to distinct edge-time cells are mutually independent. It follows that $\Delta_{\mathcal C}U$ is the sum of $|\mathcal C^+|$ independent copies of $U_{ijt}$ and $|\mathcal C^-|$ independent copies of $-U_{ijt}$. Therefore its CDF is the convolution of $|\mathcal C^+|$ copies of $F_U$ and $|\mathcal C^-|$ copies of $F_{-U}$, and is known from $F_U$. Because each summand has a continuous CDF, $F_{\mathcal C}$ is continuous, so
		\[
		\Prob(\Delta_{\mathcal C}U<c)=F_{\mathcal C}(c),
		\quad
		\Prob(\Delta_{\mathcal C}U>c)=1-F_{\mathcal C}(c).
		\]
		Applying Proposition \ref{prop:signed-subgraph} then gives the stated signed-subgraph sandwich bounds.
	\end{proof}
	
	\begin{proof}[Proof of Proposition \ref{prop:additive-weighted}]
		
		Fix $c\in\R$ and a realization $z_R$ of $\mathcal Z_{S_R}^{1:T}$. For each edge-time cell $e=(i(e),j(e),t(e))$, write
		\[
		I_e(\theta_0)
		:=
		W_e(\theta_0)+\nu_{i(e)}+\nu_{j(e)}.
		\]
		On the event $Y_{\mathcal C}^+=1$, one has $D_e=1$ for every $e\in\mathcal C^+$ and $D_e=0$ for every $e\in\mathcal C^-$. Hence, by \eqref{eq:main-model},
		\[
		U_e\le I_e(\theta_0)
		\quad
		\text{for } e\in\mathcal C^+,
		\]
		and
		\[
		U_e>I_e(\theta_0)
		\quad
		\text{for } e\in\mathcal C^-.
		\]
		Multiplying the first collection by the positive weights $\omega_e>0$ and the second by the negative weights $\omega_e<0$ yields
		\[
		\omega_eU_e\le \omega_e I_e(\theta_0)
		\quad
		\text{for } e\in\mathcal C^+,
		\]
		and
		\[
		\omega_eU_e< \omega_e I_e(\theta_0)
		\quad
		\text{for } e\in\mathcal C^-.
		\]
		Summing over $e\in\mathcal C$ and using the strict inequality from $\mathcal C^-$ (which is nonempty) gives
		\[
		\sum_{e\in\mathcal C}\omega_e U_e
		<
		\sum_{e\in\mathcal C}\omega_e W_e(\theta_0)
		+
		\sum_m \sigma_m \nu_m
		=
		\Delta_{\mathcal C,\omega}W(\theta_0)+\sum_m \sigma_m \nu_m.
		\]
		Subtracting the non-eliminated node effects from both sides yields
		\[
		\widetilde U_{\mathcal C,\omega}
		<
		\Delta_{\mathcal C,\omega}W(\theta_0).
		\]
		Therefore
		\[
		Y_{\mathcal C}^+
		\ind\{\Delta_{\mathcal C,\omega}W(\theta_0)\le c\}
		\le
		\ind\{\widetilde U_{\mathcal C,\omega}< c\}.
		\]
		
		On the flipped event $Y_{\mathcal C}^-=1$, the inequalities reverse:
		\[
		U_e>I_e(\theta_0)
		\quad\text{for } e\in\mathcal C^+,
		\quad
		U_e\le I_e(\theta_0)
		\quad\text{for } e\in\mathcal C^-.
		\]
		After multiplication by the weights and summation, the resulting inequality is strict because $\mathcal C^+$ is nonempty:
		\[
		\widetilde U_{\mathcal C,\omega}
		>
		\Delta_{\mathcal C,\omega}W(\theta_0).
		\]
		Consequently,
		\[
		Y_{\mathcal C}^-
		\ind\{\Delta_{\mathcal C,\omega}W(\theta_0)\ge c\}
		\le
		\ind\{\widetilde U_{\mathcal C,\omega}>c\}.
		\]
		
		Now fix any
		\[
		z_0\in \Supp(\mathcal Z_{S_0}^{1:T}\mid \mathcal Z_{S_R}^{1:T}=z_R).
		\]
		Because $\widetilde U_{\mathcal C,\omega}$ is measurable with respect to the shock collection $\{U_e:e\in\mathcal C\}$ and the retained node effects $\{\nu_m:m\in S_R\}$, its conditional distribution given $(\mathcal Z_{S_R}^{1:T},\mathcal Z_{S_0}^{1:T})=(z_R,z_0)$ depends on the law of the retained node effects given $\mathcal Z_{S_R}^{1:T}=z_R$ and on the law of the shocks. By Assumption \ref{ass:additive}, the node histories are i.i.d.\ across nodes, so the retained node effects $\{\nu_m:m\in S_R\}$ are conditionally independent of the eliminated-node histories $\mathcal Z_{S_0}^{1:T}$ given $\mathcal Z_{S_R}^{1:T}$. By Assumption \ref{ass:eh}, the shock collection $\{U_e:e\in\mathcal C\}$ is independent of the full node-history array. Therefore
		\[
		\Prob\left(
		\widetilde U_{\mathcal C,\omega}\le c
		\mid
		\mathcal Z_{S_R}^{1:T}=z_R,\;
		\mathcal Z_{S_0}^{1:T}=z_0
		\right)
		=
		\Prob\left(
		\widetilde U_{\mathcal C,\omega}\le c
		\mid
		\mathcal Z_{S_R}^{1:T}=z_R
		\right).
		\]
		Denote this common conditional CDF by
		\[
		F_{\mathcal C,\omega}(c\mid z_R)
		:=
		\Prob\left(
		\widetilde U_{\mathcal C,\omega}\le c
		\mid
		\mathcal Z_{S_R}^{1:T}=z_R
		\right).
		\]
		
		Taking expectations conditional on
		\[
		(\mathcal Z_{S_R}^{1:T},\mathcal Z_{S_0}^{1:T})=(z_R,z_0)
		\]
		in the first indicator inequality yields
		\[
		\E\left[
		Y_{\mathcal C}^+
		\ind\{\Delta_{\mathcal C,\omega}W(\theta_0)\le c\}
		\mid
		\mathcal Z_{S_R}^{1:T}=z_R,\;
		\mathcal Z_{S_0}^{1:T}=z_0
		\right]
		\le
		F_{\mathcal C,\omega}(c\mid z_R).
		\]
		Since this bound holds for every admissible $z_0$, taking the supremum over $z_0$ proves the first displayed inequality in the proposition.
		
		Taking expectations conditional on
		\[
		(\mathcal Z_{S_R}^{1:T},\mathcal Z_{S_0}^{1:T})=(z_R,z_0)
		\]
		in the second indicator inequality yields
		\[
		\E\left[
		Y_{\mathcal C}^-
		\ind\{\Delta_{\mathcal C,\omega}W(\theta_0)\ge c\}
		\mid
		\mathcal Z_{S_R}^{1:T}=z_R,\;
		\mathcal Z_{S_0}^{1:T}=z_0
		\right]
		\le
		1-F_{\mathcal C,\omega}(c\mid z_R).
		\]
		Rearranging and then taking the infimum over admissible $z_0$ gives
		\[
		\inf_{z_0\in\Supp(\mathcal Z_{S_0}^{1:T}\mid \mathcal Z_{S_R}^{1:T}=z_R)}
		\left[
		1-
		\E\left[
		Y_{\mathcal C}^-
		\ind\{\Delta_{\mathcal C,\omega}W(\theta_0)\ge c\}
		\mid
		\mathcal Z_{S_R}^{1:T}=z_R,\;
		\mathcal Z_{S_0}^{1:T}=z_0
		\right]
		\right]
		\ge
		F_{\mathcal C,\omega}(c\mid z_R).
		\]
		This proves the proposition.
	\end{proof}
	
	\begin{proof}[Proof of Theorem \ref{thm:tetrad-logit}]
		
		Let $\mathcal C=(\mathcal C^+,\mathcal C^-)$ be a completely node-balanced configuration. Under Assumption \ref{ass:lt}, conditional on the node effects $\nu=(\nu_m)_m$ and the observed edge covariates $\mathcal Z_{\mathcal C}$, the link indicators $\{D_e:e\in\mathcal C^+\cup\mathcal C^-\}$ are independent Bernoulli random variables with success probabilities
		\[
		p_e
		=
		\Lambda(\eta_e),
		\quad
		\eta_e:=
		W_e(\theta_0) + \nu_{i(e)} + \nu_{j(e)},
		\]
		where $\Lambda(x)=\exp(x)/(1+\exp(x))$ is the logistic CDF. Hence
		\[
		\Prob(Y_{\mathcal C}^+ = 1 \mid \mathcal Z_{\mathcal C},\nu)
		=
		\prod_{e\in\mathcal C^+}\Lambda(\eta_e)
		\prod_{e\in\mathcal C^-}[1-\Lambda(\eta_e)],
		\]
		and
		\[
		\Prob(Y_{\mathcal C}^- =1 \mid \mathcal Z_{\mathcal C},\nu)
		=
		\prod_{e\in\mathcal C^+}[1-\Lambda(\eta_e)]
		\prod_{e\in\mathcal C^-}\Lambda(\eta_e).
		\]
		Taking the ratio and using $\Lambda(\eta)/[1-\Lambda(\eta)]=\exp(\eta)$ gives
		\[
		\frac{
			\Prob(Y_{\mathcal C}^+ =1 \mid \mathcal Z_{\mathcal C},\nu)
		}{
			\Prob(Y_{\mathcal C}^- = 1\mid \mathcal Z_{\mathcal C},\nu)
		}
		=
		\prod_{e\in\mathcal C^+}\exp(\eta_e)
		\prod_{e\in\mathcal C^-}\exp(-\eta_e)
		=
		\exp\!\left(
		\sum_{e\in\mathcal C^+}\eta_e - \sum_{e\in\mathcal C^-}\eta_e
		\right).
		\]
		Now substitute $\eta_e = W_e(\theta_0)+\nu_{i(e)}+\nu_{j(e)}$ and decompose the exponent:
		\[
		\sum_{e\in\mathcal C^+}\eta_e - \sum_{e\in\mathcal C^-}\eta_e
		=
		\Delta_{\mathcal C}W(\theta_0)
		+
		\sum_m \sigma_m\,\nu_m.
		\]
		By the complete node-balance assumption $\sigma_m=0$ for every node $m$, so
		\[
		\frac{
			\Prob(Y_{\mathcal C}^+ =1 \mid \mathcal Z_{\mathcal C},\nu)
		}{
			\Prob(Y_{\mathcal C}^- = 1\mid \mathcal Z_{\mathcal C},\nu)
		}
		=
		\exp\!\bigl(\Delta_{\mathcal C}W(\theta_0)\bigr).
		\]
		The right side depends only on observables and not on $\nu$, so taking conditional expectations with respect to $\nu$ given $\mathcal Z_{\mathcal C}$ preserves the same ratio:
		\[
		\Prob(Y_{\mathcal C}^+ = 1 \mid \mathcal Z_{\mathcal C})
		=
		\exp\!\bigl(\Delta_{\mathcal C}W(\theta_0)\bigr)\,
		\Prob(Y_{\mathcal C}^- = 1\mid \mathcal Z_{\mathcal C}).
		\]
		Taking logarithms gives the first display in the theorem. The conditional-logit representation follows by conditioning on $Y_{\mathcal C}^+ + Y_{\mathcal C}^- = 1$:
		\[
		\Prob(Y_{\mathcal C}^+ =1 \mid Y_{\mathcal C}^+ + Y_{\mathcal C}^- = 1,\;\mathcal Z_{\mathcal C})
		=
		\frac{\exp(\Delta_{\mathcal C}W(\theta_0))}{1+\exp(\Delta_{\mathcal C}W(\theta_0))}.
		\]
		
		For point identification, suppose $\theta$ satisfies $\Delta_{\mathcal C}W(\theta)=\Delta_{\mathcal C}W(\theta_0)$ for every completely node-balanced configuration $\mathcal C$. Then $\Delta_{\mathcal C}W(\theta-\theta_0)=0$ for every such $\mathcal C$. If the support of $\{\Delta_{\mathcal C}W(\theta_0)\}$ spans $\R^{d_h+d_x}$, this implies $\theta=\theta_0$.
	\end{proof}
	
	\section{Latent-Distance Dyad Effects}
	\label{sec:latent-distance}
	
	Now consider the more structured form
	\[
	A_{ij} = \nu_i + \nu_j - |\xi_i-\xi_j|,
	\]
	where $\xi_i$ is unobserved and time invariant. This case still fits Proposition \ref{prop:dyad} exactly. From the dyad-panel perspective,
	\[
	A_{ij}^\ast := \nu_i + \nu_j - |\xi_i-\xi_j|
	\]
	is just another time-invariant dyad effect, so the semiparametric identified-set arguments are unchanged.
	
	However, the node-balanced cancellation behind Theorem \ref{thm:tetrad-logit} now breaks. For instance, for a tetrad $(i,j,h,k)$,
	\[
	A_{ij}+A_{hk}-A_{ik}-A_{jh}
	=
	-|\xi_i-\xi_j|
	-|\xi_h-\xi_k|
	+|\xi_i-\xi_k|
	+|\xi_j-\xi_h|.
	\]
	The additive node effects still cancel, but the latent-distance terms do not generally vanish. The same failure applies to any completely node-balanced configuration.
	
	\begin{proposition}[What survives under latent-distance heterogeneity]
		\label{prop:case-c-basic}
		Under the model
		\[
		D_{ijt}
		=
		\ind\left\{
		Z_{ijt}'\alpha_0
		+ X_{ij,t-1}'\lambda_0
		+ \nu_i+\nu_j-|\xi_i-\xi_j|
		- U_{ijt}
		\ge 0
		\right\},
		\]
		Propositions \ref{prop:dyad}--\ref{prop:signed-subgraph} and Proposition \ref{prop:known-composite} remain valid. Proposition \ref{prop:additive-weighted} generally fails, and so does Theorem \ref{thm:tetrad-logit}.
	\end{proposition}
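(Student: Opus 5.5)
The proof has two halves. The positive half shows that Propositions \ref{prop:dyad}--\ref{prop:signed-subgraph} and \ref{prop:known-composite} never use the internal structure of $A_{ij}$. The negative half exhibits configurations where the node-based cancellation behind Proposition \ref{prop:additive-weighted} and Theorem \ref{thm:tetrad-logit} leaves a nonvanishing latent-distance remainder.

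\emph{Positive half.} Set $A_{ij}^\ast:=\nu_i+\nu_j-|\xi_i-\xi_j|$ and write the model as \eqref{eq:main-model} with $A_{ij}$ replaced by $A_{ij}^\ast$.

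I then re-read each proof in the appendix and record what it needs about the fixed effect. Proposition \ref{prop:dyad} needs only that $A_{ij}^\ast$ is time invariant and that $U_{ij}^{1:T}$ is independent of $(A_{ij}^\ast,Z)$. The conditional CDF $F_h(c)=\E[F_U(c+A^\ast_{ij})\mid Z_{ij}^{1:T}=h]$ is then again common across dates.

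The one point to state explicitly is the independence. Assumption \ref{ass:eh} must be read with $(A_{ij}^\ast)_{ij}$, equivalently with $(\nu_i,\xi_i)_i$, in place of $(A_{ij})_{ij}$. I will impose that the shocks are independent of $\{(\nu_i,\xi_i,Z_i^{1:T})\}_i$; this is the natural reading. Because $A^\ast_{ij}$ is a measurable function of $(\nu_i,\nu_j,\xi_i,\xi_j)$, independence from the node arrays then implies independence from $(A^\ast_{ij})_{ij}$.

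Propositions \ref{prop:dyad-transition} and \ref{prop:signed-subgraph} use only the fact that a dyad-balanced signed subgraph cancels $A_{ij}$ algebraically for each dyad. That holds for any dyad-indexed time-invariant effect, so $A^\ast_{ij}$ cancels, and the shock-exogeneity step is unchanged. Proposition \ref{prop:known-composite} is a corollary of Propositions \ref{prop:dyad-transition} and \ref{prop:signed-subgraph} plus a convolution computation that involves only the $U$'s, so it survives verbatim.

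\emph{Negative half.} ``Generally fails'' should be proved by counterexample, not for every configuration.

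For Proposition \ref{prop:additive-weighted}, take a completely node-balanced weighted configuration that is not dyad-balanced, say the within-date tetrad of Remark \ref{rem:tetrad-special}. Redoing the derivation in that proof, the composite error becomes
\[
\widetilde U+|\xi_i-\xi_j|+|\xi_h-\xi_k|-|\xi_i-\xi_k|-|\xi_j-\xi_h|.
\]
This remainder depends on $\xi$ of the eliminated nodes. The $\xi$'s may be correlated with $Z$, since Assumption \ref{ass:additive} does not restrict their joint law. So the conditional law given $(z_R,z_0)$ can vary with $z_0$, and the claimed inequalities can be violated.

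To make this concrete, I would build a simple example. Let $\xi_m=Z_m$ be scalar and time invariant, let $\alpha_0=\lambda_0=0$, and let $T=1$. Compute both sides and show that the sup-over-$z_0$ bound exceeds the inf-over-$z_0$ bound for some $c$, so no CDF $F_{\mathcal C,\omega}(\cdot\mid z_R)$ can sit between them.

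For Theorem \ref{thm:tetrad-logit}, repeat the proof. The conditional log odds ratio given $(\mathcal Z_{\mathcal C},\nu,\xi)$ equals $\Delta_{\mathcal C}W(\theta_0)+Q(\xi)$, where $Q(\xi)$ is the latent-distance tetrad contrast above. After integrating over $\xi\mid\mathcal Z_{\mathcal C}$, the ratio $\Prob(Y^+\mid\mathcal Z)/\Prob(Y^-\mid\mathcal Z)$ is a $\Prob(Y^-\mid\cdot)$-weighted average of $\exp(\Delta W+Q)$. This differs from $\exp(\Delta W)$ whenever $Q$ is nondegenerate; for instance, a point mass at a configuration with $Q>0$ strictly raises the ratio.

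The main obstacle is making the counterexample for Proposition \ref{prop:additive-weighted} fully explicit: the inequalities must actually be violated, not merely left unproven. I would first try the degenerate choice $\xi=Z$ with no shocks on the retained side ($S_R=\varnothing$), where the bounds reduce to comparisons of probabilities that can be computed by hand.
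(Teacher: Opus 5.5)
Your positive half is the paper's argument. You rewrite the model with the time-invariant $A_{ij}^\ast$ and note that the proofs of Propositions \ref{prop:dyad}--\ref{prop:signed-subgraph} use only three things: time invariance, shock exogeneity, and dyad-level algebraic cancellation. Proposition \ref{prop:known-composite} then follows as a corollary plus a convolution of shock laws. You read Assumption \ref{ass:eh} with the shocks independent of $(\nu_i,\xi_i,Z_i^{1:T})_i$; the paper leaves this implicit, but it is what is actually needed.

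The negative half is where you depart from the paper. The paper only observes that the remainder $-(|\xi_i-\xi_j|+|\xi_h-\xi_k|)+|\xi_i-\xi_k|+|\xi_j-\xi_h|$ does not cancel in node-balanced configurations and is ``generally nonzero and unobserved,'' and it stops there. You instead build explicit counterexamples. This adds rigor, and for Proposition \ref{prop:additive-weighted} it closes a real gap in the paper's heuristic. That proposition only asserts the \emph{existence} of some latent CDF $F_{\mathcal C,\omega}(\cdot\mid z_R)$. A non-cancelling remainder is therefore harmless whenever its conditional law does not vary with $z_0$, e.g.\ when $\xi_m$ is independent of $Z_m^{1:T}$ under i.i.d.\ node histories. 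In that case the remainder is absorbed into the latent CDF and the envelope inequalities survive. Failure needs dependence between $\xi$ and $Z$, which is exactly what your $\xi_m=Z_m$ construction supplies. For Theorem \ref{thm:tetrad-logit}, which asserts an exact equality, non-cancellation does generically break the claim. Your weighted-average representation of the odds ratio is the right way to see it.

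There are three points to tighten.

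\emph{First.} The odds ratio need not differ from $e^{\Delta_{\mathcal C}W(\theta_0)}$ merely because $Q$ is nondegenerate. $Q$ can take both signs, and a $\Prob(Y_{\mathcal C}^-\mid\cdot)$-weighted mean of $e^{Q}$ can equal one. What your point-mass example actually uses is that $Q$ has one sign and is not a.s.\ zero given $\mathcal Z_{\mathcal C}$. With $\xi_m=Z_m$ this holds at $(z_i,z_j,z_h,z_k)=(0,0,1,1)$, where $Q=2$.

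\emph{Second.} In your Proposition \ref{prop:additive-weighted} example, $\Delta_{\mathcal C,\omega}W(\theta_0)\equiv 0$ and $S_R=\varnothing$. At $c=0$ the sandwich is therefore infeasible exactly when $\sup_z\Prob(Y_{\mathcal C}^+=1\mid z)+\sup_z\Prob(Y_{\mathcal C}^-=1\mid z)>1$.
\begin{itemize}
\item If you also take $\nu\equiv 0$ and a median-zero continuous shock, each supremum is at most $1/4$, so you get no violation. The links must be nearly deterministic.
\item Take $U\equiv 0$, $\nu\equiv 0$, $Z_m$ i.i.d.\ on $\{0,1\}$, and $\xi_m=Z_m$. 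Then $D_{ij}=\ind\{Z_i=Z_j\}$.
\item At $z=(0,0,1,1)$ this gives $Y_{\mathcal C}^+=1$ almost surely, and at $z'=(0,1,1,0)$ it gives $Y_{\mathcal C}^-=1$ almost surely. This forces $1\le F(0)\le 0$.
\item With logistic shocks instead, use $\nu\equiv\bar\nu$ large and support $\{0,M\}$ with $M-2\bar\nu$ large.
\end{itemize}

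\emph{Third.} With time-invariant scalar $Z$ and $\xi_m=Z_m$, you have $|\xi_i-\xi_j|=Z_{ijt}$. Your DGP is then observationally the additive-node model with $\alpha_0-1$ in place of $\alpha_0$. The counterexample still refutes the literal claim at $\theta_0$, but the failure amounts to a relabeling of $\alpha$. If you want it to reflect a substantive loss, let $\xi_m$ depend on $Z_m$ only stochastically.
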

	
	\noindent No separate identification result is pursued here. Once $|\xi_i-\xi_j|$ is treated as part of a general time-invariant dyad effect, the semiparametric arguments of Propositions~\ref{prop:dyad}--\ref{prop:signed-subgraph} already cover it, and Proposition~\ref{prop:known-composite} applies whenever its serial-independence assumption holds. What is lost, relative to the additive-node benchmark, is the weighted node-differencing of Proposition~\ref{prop:additive-weighted} and the exact node-balanced cancellation behind Theorem~\ref{thm:tetrad-logit}, both of which require $A_{ij}=\nu_i+\nu_j$.
	
	\begin{proof}[Proof of Proposition \ref{prop:case-c-basic}]
		Define
		\[
		A_{ij}^\ast:=\nu_i+\nu_j-|\xi_i-\xi_j|.
		\]
		This object is time invariant at the dyad level. Therefore the model can be rewritten as
		\[
		D_{ijt}
		=
		\ind\left\{
		Z_{ijt}'\alpha_0+X_{ij,t-1}'\lambda_0+A_{ij}^\ast-U_{ijt}\ge 0
		\right\},
		\]
		which is of exactly the same form as \eqref{eq:main-model} with a generic time-invariant dyad effect. Proposition \ref{prop:dyad} uses only that time invariance, together with Assumption \ref{ass:eh}, so it applies without change. The same is true of Propositions \ref{prop:dyad-transition} and \ref{prop:signed-subgraph}: their proofs only use the fact that the same dyad effect appears with opposite signs and therefore cancels algebraically in the relevant signed comparison. Proposition~\ref{prop:known-composite} likewise applies, since it requires only Assumption~\ref{ass:eh}, serial independence, and a known marginal CDF, none of which depend on the form of $A_{ij}$.
		
		By contrast, Proposition~\ref{prop:additive-weighted} and Theorem~\ref{thm:tetrad-logit} both require the additive-node representation $A_{ij}=\nu_i+\nu_j$. The weighted node-differencing in Proposition~\ref{prop:additive-weighted} eliminates node effects $\nu_m$ by setting weighted incidence sums to zero, but the latent-distance component $|\xi_i-\xi_j|$ is a nonlinear function of node-level latent variables and generally does not cancel under the same weighted configuration. Similarly, any completely node-balanced configuration used in Theorem~\ref{thm:tetrad-logit} leaves a residual latent-distance term---for instance, the tetrad residual
		\[
		-(|\xi_i-\xi_j|+|\xi_h-\xi_k|)+|\xi_i-\xi_k|+|\xi_j-\xi_h|,
		\]
		which is generally nonzero and unobserved. Hence both Proposition~\ref{prop:additive-weighted} and Theorem~\ref{thm:tetrad-logit} generally fail.
	\end{proof}
	
\end{document}